\documentclass[pra,onecolumn,groupedaddress,superscriptaddress,amsmath,amssymb,a4paper,longbibliography]{revtex4-2}
\usepackage{geometry}
\usepackage{graphicx}
\usepackage{ragged2e}
\usepackage{mathrsfs}
\usepackage{amssymb}
\usepackage{amsmath}
\usepackage{amsthm}
\usepackage{amsbsy}
\usepackage{bm}
\usepackage[T1]{fontenc}
\usepackage{color}
\newtheorem{proposition}{Proposition}

\theoremstyle{definition}

\newtheorem{remark}{Remark}

\newcommand{\bra}[1]{\langle #1|}
\newcommand{\ket}[1]{| #1 \rangle }

\usepackage{comment}

\DeclareMathOperator{\Tr}{Tr}

\usepackage{xcolor}
\usepackage[normalem]{ulem}

\usepackage{url}
\usepackage{hyperref}
\usepackage[nameinlink,capitalize]{cleveref}
\hypersetup{
  colorlinks = true,
  linkcolor = purple,
  citecolor = purple,
  urlcolor = blue,
  linkbordercolor = white
}

\usepackage{orcidlink}

\begin{document}
\title{Sparse positive maps on qutrits with exact nondecomposability thresholds and PPT-entanglement transitions}

\date{\today}

\author{Davide Poderini\,\orcidlink{0000-0003-0577-1608}}
	\email[Davide Poderini: ]{davide.poderini@unipv.it}
	\affiliation{Dipartimento di Fisica, Università degli Studi di Pavia, Via Agostino Bassi 6, I-27100, Pavia, Italy}

\author{Angela Rosy Morgillo\,\orcidlink{0009-0006-6142-0692}}
	\email[Angela Rosy Morgillo: ]{angelarosy.morgillo01@ateneopv.it}
	\affiliation{Dipartimento di Fisica, Università degli Studi di Pavia, Via Agostino Bassi 6, I-27100, Pavia, Italy}
	\affiliation{INFN Sezione di Pavia, Via Agostino Bassi 6, I-27100, Pavia, Italy}

\author{Fabio Benatti\,\orcidlink{0000-0002-0712-2057}}
	\email[Fabio Benatti: ]{fbenatti@units.it}
	\affiliation{Dipartimento di Fisica, Università di Trieste, Strada Costiera 11, 34151 Trieste, Italy}
	\affiliation{INFN Sezione di Trieste, via Alfonso Valerio 2, I-34151, Trieste, Italy}

\author{Fabio Anselmi\,\orcidlink{0000-0002-0264-4761}}
	\email[Fabio Anselmi: ]{fabio.anselmi@units.it}
	\affiliation{Dipartimento di Matematica, Informatica e Geoscienze, Università di Trieste, via Alfonso Valerio 2, 34127 Trieste, Italy}
	\affiliation{MIT, 77 Massachusetts Ave, Cambridge, MA 02139, United States of America}

\author{Chiara Macchiavello\,\orcidlink{0000-0002-2955-8759}}
	\email[Chiara Macchiavello: ]{chiara.macchiavello@unipv.it}
	\affiliation{Dipartimento di Fisica, Università degli Studi di Pavia, Via Agostino Bassi 6, I-27100, Pavia, Italy}
	\affiliation{INFN Sezione di Pavia, Via Agostino Bassi 6, I-27100, Pavia, Italy}

\author{Massimiliano F. Sacchi\,\orcidlink{0000-0002-8909-2196}}
	\email[Massimiliano F. Sacchi: ]{massimiliano.sacchi@unipv.it}
	\affiliation{CNR-Istituto di Fotonica e Nanotecnologie, Piazza Leonardo da Vinci 32, I-20133, Milano, Italy}
	\affiliation{Dipartimento di Fisica, Università degli Studi di
          Pavia, Via Agostino Bassi 6, I-27100, Pavia, Italy}

        \begin{abstract}
We study a family of sparse positive maps on qutrits for which
positivity, decomposability, and PPT entanglement can all be analysed
explicitly.  The block structure of the associated Choi matrices
reduces positivity to a Hermitian biquadratic form and leads to exact
positivity boundaries for three representative parametric families. For
the same families we determine the exact transition between
decomposable and non-decomposable maps and construct associated PPT
states of two classes. The first consists of witness-adapted
deformations naturally tied to the non-decomposability analysis. The
second consists of analytically tractable families whose full
PPT-entangled branch is detected by fixed positive maps, yielding
exact thresholds between separability and bound entanglement. For the
trace-preserving subclass, we further compare positivity with a recent
eigenvalue bound for $2$-positive maps, thereby making the gap between
positivity and higher-order positivity fully explicit within this
family.
\end{abstract}

\maketitle

\section{Introduction}
\label{sec:intro}

Entanglement is one of the defining features of quantum theory and a key
resource for quantum information processing, communication, sensing, and
simulation~\cite{horodecki2009quantum}. Determining whether a bipartite mixed
state is separable or entangled is, however, a notoriously difficult
problem. Among the most powerful approaches to this problem is the theory of
positive maps: a bipartite state is separable if and only if it remains
positive under $\mathrm{id}\otimes\Phi$ for every positive map
$\Phi$~\cite{HORODECKI19961,majewski2001acharacterization}. In finite
dimensions this viewpoint is naturally expressed through the
Choi--Jamiołkowski correspondence~\cite{jamiolkowski1972linear,choi1975completely},
while completely positive maps admit Kraus--Stinespring
representations~\cite{stinespring1955positive}. Positive maps are therefore
central both to the mathematical structure of quantum operations and to the
operational detection of entanglement. The present work identifies a
tractable qutrit family in which this map-based perspective can be developed
fully explicitly: positivity, decomposability, and the associated
PPT-entangled states can all be characterized in closed form.

Among positive maps, the non-decomposable ones are especially important.
Decomposable maps cannot witness PPT entangled states, i.e.\ states that are
entangled despite having positive partial transpose and therefore cannot be detected by the Peres-Horodecki criterion~\cite{peres1996separability,HORODECKI19961,benatti2004non}. 
For this reason, explicit families of positive maps for which positivity,
decomposability, and the associated PPT-entangled states can all be analyzed
in detail remain valuable tools for understanding the geometry of quantum
states and the structure of entanglement witnesses~\cite{terhal2001family,benatti2004quantum,bhattacharya2021generating}. 
In the family studied here, the sparsity of the Choi matrix makes it possible to obtain exact positivity boundaries, exact decomposability thresholds, and explicit PPT-state families whose separable and PPT-entangled regimes can be resolved analytically.

Despite substantial progress, analytically tractable families of non-decomposable positive maps are still comparatively scarce. 
Classical examples include the Choi map and its generalizations~\cite{cho1992generalized,jafarizadeh2006generalized,chruscinski2018generalizing,ha2013exposedness}, while more recent constructions rely on convex-geometric, algebraic, or structural techniques~\cite{kossakowski2003class,chruscinski2009geometry,hou2010characterization,marciniak2017merging,zwolak2013new,muller2018decomposability,majewski2017origin,jannesary2025class,gulati2025positive}. 
What remains relatively uncommon is a family simple enough to admit exact description of the boundaries in the space of positive maps and explicit constructions of the corresponding PPT-entangled state.
The maps studied here originally emerged from an optimization-based search for non-decomposable positive maps~\cite{morgillo2026maps}.
The main feature of the present family is the sparsity of its Choi matrix that makes it possible to carry out the analysis almost entirely in closed form.
In particular, positivity reduces to a Hermitian biquadratic form, complete positivity becomes a simple set of block constraints, and decomposability can be studied by pairing the maps with explicit PPT states.

Our main result is that this qutrit family provides an analytically controlled setting in which positivity, decomposability, and PPT entanglement can be studied on equal footing. 
For three representative parametric families we derive exact positivity regions and exact decomposability thresholds. 
We then construct associated PPT-state families that make the witnessing action of the maps explicit. Besides the witness-adapted deformations naturally tied to the decomposability analysis, we identify further one-parameter PPT families whose entire PPT-entangled branch is detected by fixed positive maps, thereby yielding exact separability thresholds. 
For the
trace-preserving subclass, we also relate our results to a recent spectral
bound for $2$-positive maps~\cite{boundoneigs2025}, which makes the gap
between positivity and higher-order positivity completely explicit within this
family.

The paper is organized as follows. In \cref{sec:theory} we review the
relevant notions on positive maps, decomposability, and Choi matrices. In
\cref{sec:map} we introduce the family under study, determine the spectrum of
its Choi matrix, and derive the complete-positivity conditions. In
\cref{sec:positivity} we obtain exact positivity regions for three
representative parametric families, while in \cref{sec:decomposability} we
determine the corresponding decomposability thresholds and, in the
trace-preserving case, compare them with the spectral bound
of~\cite{boundoneigs2025}. In \cref{sec:bound_ent_states} we construct the
associated PPT-state families, first in the witness-adapted form naturally
linked to the decomposability analysis and then in a sharper form whose full
PPT-entangled branch is detected by fixed witnesses. Finally,
\cref{sec:conclusion} summarizes the main conclusions and possible
directions for further work. Technical details are collected in the
appendices.
\section{Preliminaries and notation}
\label{sec:theory}

We briefly review the notions and tools on Hermiticity-preserving linear maps on matrix algebras that will be used throughout the paper, together with their relation to bipartite entanglement.

Let $\mathcal{B}(\mathbb{C}^d)$ denote the algebra of complex $d\times d$ matrices, and let $\Phi:\mathcal{B}(\mathbb{C}^d)\to\mathcal{B}(\mathbb{C}^d)$ be a linear map. We shall consider maps that preserve Hermiticity, namely maps such that $\Phi[X]^\dagger=\Phi[X]$ whenever $X^\dagger=X$.
Among such maps, the following classes play a central role:
\begin{enumerate}
\item
$\Phi$ is \emph{positive} if $\Phi[X]\succeq 0$ for every positive semidefinite operator $X\in\mathcal{B}(\mathbb{C}^d)$;

\item
$\Phi$ is \emph{$k$-positive} if the lifted map $\mathrm{id}_k\otimes\Phi$ is positive on $\mathcal{B}(\mathbb{C}^k)\otimes\mathcal{B}(\mathbb{C}^d)$, where $\mathrm{id}_k$ denotes the identity map on $\mathcal{B}(\mathbb{C}^k)$;

\item
$\Phi$ is \emph{completely positive} if it is $k$-positive for every $k\ge 1$; in particular, $1$-positivity coincides with ordinary positivity;

\item
$\Phi$ is \emph{completely co-positive} if $\Phi\circ T$ is completely positive, where $T$ denotes transposition on $\mathcal{B}(\mathbb{C}^d)$;

\item
$\Phi$ is \emph{decomposable} if it can be written as the sum of a completely positive map and a completely co-positive one.
\end{enumerate}

A basic tool in the analysis of such maps is the \emph{Choi matrix} associated with $\Phi$,
\begin{equation}
\label{eq:ChoiM}
C_\Phi=(\mathrm{id}_d\otimes\Phi)\ket{\psi}\bra{\psi}
=\sum_{n,m=1}^{d}\ket{n}\bra{m}\otimes\Phi[\ket{n}\bra{m}]
=
\begin{pmatrix}
\Phi[\ket{1}\bra{1}]&\cdots&\Phi[\ket{1}\bra{d}]\cr
\cdot&\cdots&\cdot\cr
\cdot&\cdots&\cdot\cr
\cdot&\cdots&\cdot\cr
\Phi[\ket{d}\bra{1}]&\cdots&\Phi[\ket{d}\bra{d}]
\end{pmatrix},
\end{equation}
where
\begin{equation}
\ket{\psi}=\sum_{n=1}^{d}\ket{n}\otimes\ket{n}\in \mathbb{C}^d\otimes\mathbb{C}^d
\label{eq:sym-vec}
\end{equation}
is the unnormalized maximally entangled vector in the computational basis. Choi's theorem states that $\Phi$ is completely positive if and only if $C_\Phi\succeq 0$~\cite{choi1975completely}. Equivalently, for maps on $\mathcal{B}(\mathbb{C}^d)$, $d$-positivity already implies complete positivity.

Whenever $\Phi$ is completely positive, it admits a Kraus--Stinespring representation of the form
\begin{equation}
\label{eq:KS}
\Phi[X]=\sum_{\alpha=1}^{d^2}\gamma_\alpha\,F^\dagger_\alpha\,X\,F_\alpha\,,
\end{equation}
where $\gamma_\alpha\geq 0$ are the eigenvalues of $C_\Phi$. The corresponding Kraus operators $F_\alpha$ are obtained from the eigenvectors $\ket{\phi_\alpha}$ of $C_\Phi$. Writing
\begin{equation}
\label{eq:kraus}
\ket{\phi_\alpha}=\sum_{n=1}^{d}\ket{n}\otimes\ket{\phi_{\alpha n}}
=(\mathrm{1}_d\otimes F^\dagger_\alpha)\ket{\psi}\,,
\end{equation}
their matrix elements are
\begin{equation}
\label{KSformula}
\langle m\vert F_\alpha^\dagger\vert n\rangle=\langle m\vert\phi_{\alpha n}\rangle\,.
\end{equation}

Ordinary positivity is weaker than complete positivity. A Hermiticity-preserving map $\Phi$ is positive if and only if its Choi matrix is \emph{block-positive}, namely if
\[
\bra{\chi}\otimes\bra{\phi}\,C_\Phi\,\ket{\chi}\otimes\ket{\phi}\geq 0
\qquad \forall\,\ket{\chi},\ket{\phi}\in\mathbb{C}^d.
\]
By convexity, this is equivalent to non-negativity on all separable states. The problem of characterizing positive but not completely positive maps can therefore be reformulated in spectral terms: how many eigenvalues of $C_\Phi$ may be negative, and how negative can they be, while block-positivity is still preserved?

This question is directly connected with entanglement detection. Indeed, separable states---that is, convex combinations of tensor-product states---remain positive under the action of $\mathrm{id}\otimes\Phi$ for every positive map $\Phi$. Conversely, a fundamental theorem states that a bipartite state is entangled if and only if it is detected by some positive but not completely positive map, meaning that $(\mathrm{id}\otimes\Phi)[\rho]$ fails to be positive for at least one such $\Phi$~\cite{HORODECKI19961}. In this sense, positive but not completely positive maps act as entanglement witnesses.

The prototypical example is the transposition map $T$. Its Choi matrix is the flip operator $V$, defined by
\[
V\ket{\chi}\otimes\ket{\phi}=\ket{\phi}\otimes\ket{\chi}.
\]
In dimension $d$, $V$ has eigenvalue $-1$ with degeneracy
$d(d-1)/2$. For bipartite systems of dimensions $2\times2$ and
$2\times3$, positivity under partial transposition is equivalent to
separability because in these cases all positive maps are
decomposable~\cite{woronowicz1976positive}. In higher dimensions this
equivalence breaks down: there exist entangled states that remain
positive under partial transposition. These are the so-called
\emph{bound-entangled states}, namely entangled states from which no
pure entanglement can be distilled.

For these reasons, the construction of new classes of positive but not completely positive maps remains a problem of primary interest.

\section{Qutrit map family and Choi structure}
\label{sec:map}

We now introduce the family of maps that will be studied throughout the paper. Let
$\Phi:\mathcal{B}(\mathbb{C}^3)\to \mathcal{B}(\mathbb{C}^{3})
$ act on a generic matrix $X\in\mathcal{B}(\mathbb{C}^3)$ as
\begin{equation}
\Phi[X]= \begin{pmatrix}
a X_{11}+c X_{22}+b X_{33}
&
w^*\, X_{21}
&
z\, X_{13}
\\[4pt]
w\, X_{12}
&
b\,X_{11}+a\,X_{22}+c\,X_{33}
&
0
\\[4pt]
z^*\,X_{31}
&
0
&
c\,X_{11}+b\,X_{22}+a\,X_{33}
\end{pmatrix},
\label{eq:map}
\end{equation}
where $a,b,c$ are non-negative real parameters and $w,z$ are complex numbers.

The map is proportional to a trace-preserving unital map. Indeed,
\[
\Tr[\Phi(X)]=(a+b+c)\Tr[X] \ ,
\qquad
\Phi(\mathrm{1}_3)=(a+b+c)\mathrm{1}_3 \ .
\]

The corresponding Choi matrix defined in \cref{eq:ChoiM} is
\begin{equation}
\setlength{\arraycolsep}{6pt}
C_\Phi=(\mathrm{id}_3\otimes\Phi)\ket{\psi}\bra{\psi}
=\left(\begin{array}{ccc|ccc|ccc}
 a & \cdot & \cdot & \cdot & \cdot & \cdot & \cdot & \cdot & z\\
 \cdot & b & \cdot & w & \cdot & \cdot & \cdot & \cdot & \cdot\\
 \cdot & \cdot & c & \cdot & \cdot & \cdot & \cdot & \cdot & \cdot\\ \hline
 \cdot & w^* & \cdot & c & \cdot & \cdot & \cdot & \cdot & \cdot\\
 \cdot & \cdot & \cdot & \cdot & a & \cdot & \cdot & \cdot & \cdot\\
 \cdot & \cdot & \cdot & \cdot & \cdot & b & \cdot & \cdot & \cdot\\ \hline
 \cdot & \cdot & \cdot & \cdot & \cdot & \cdot & b & \cdot & \cdot\\
 \cdot & \cdot & \cdot & \cdot & \cdot & \cdot & \cdot & c & \cdot\\
 z^* & \cdot & \cdot & \cdot & \cdot & \cdot & \cdot & \cdot & a
\end{array}\right) \ ,
\label{eq:ch1}
\end{equation}
where zero entries have been replaced by dots for readability.

Because $C_\Phi$ consists of two non-trivial $2\times2$
blocks together with five one-dimensional blocks, its spectrum can be
easily read off.  The eigenvalues are
\begin{eqnarray}
\label{eig1234567}
\gamma_{1,2}=a\pm|z|
\ ,\ \gamma_3=\gamma_6=b\ ,\ \gamma_4=\gamma_7=c\ ,\ \gamma_5=a\ ,
\gamma_{8,9}=\frac{b+c\pm\sqrt{(b-c)^2+4|w|^2}}{2}\ .
\end{eqnarray}
Therefore, the Choi matrix is positive semidefinite, and hence $\Phi$ is completely positive, if and only if
\begin{equation}
\label{CPcond}
a\geq|z|\ ,\ b\geq 0\ ,\ c\geq 0\ ,\ bc\geq |w|^2\ .
\end{equation}
In \cref{app:KS} we derive the corresponding Kraus-Stinespring expansion~\eqref{eq:KS}, whose Kraus operators are
\begin{align}
\label{eq:Kraus1}
F_1^\dag &= \frac{1}{\sqrt{2}}
\begin{pmatrix}
1&0&0\\
0&0&0\\
0&0&{\rm e}^{-i\arg(z)}
\end{pmatrix} \ , \quad
F_2^\dag =
\frac{1}{\sqrt{2}}
\begin{pmatrix}
1&0&0\\
0&0&0\\
0&0&-{\rm e}^{-i\arg(z)}
\end{pmatrix} \ , \quad
F_3^\dag =
\begin{pmatrix}
0&0&1\\
0&0&0\\
0&0&0
\end{pmatrix} \ ,
\\[4pt]
\label{eq:Kraus2}
F_4^\dag &=
\begin{pmatrix}
0&0&0\\
0&0&1\\
0&0&0
\end{pmatrix} \ , \quad
F_5^\dag =
\begin{pmatrix}
0&0&0\\
0&1&0\\
0&0&0
\end{pmatrix} \ , \quad
F_6^\dag =
\begin{pmatrix}
0&0&0\\
0&0&0\\
0&1&0
\end{pmatrix} \ , \quad
F_7^\dag =
\begin{pmatrix}
0&0&0\\
0&0&0\\
1&0&0
\end{pmatrix} \ ,
\\[4pt]
\label{eq:Kraus3}
F_8^\dag &=
\frac{1}{\sqrt{|w|^2+(\gamma_8-b)^2}}
\begin{pmatrix}
0&(\gamma_8-b){\rm e}^{-i\arg(w)}&0\\
|w|&0&0\\
0&0&0
\end{pmatrix} \ , \quad
F_9^\dag =
\frac{1}{\sqrt{|w|^2+(\gamma_9-b)^2}}
\begin{pmatrix}
0& (\gamma_9-b){\rm e}^{-i\arg(w)} &0\\
|w|&0&0\\
0&0&0
\end{pmatrix} \ .
\end{align}

\begin{remark}
\label{rem1}
As discussed in the previous section, $3$-positivity implies
$2$-positivity. In the present setting, however, the sparsity and
block structure of $C_\Phi$ in \cref{eq:ch1} make the
complete-positivity conditions in \cref{CPcond} not only sufficient
but also necessary for the map to be $2$-positive (see \cref{app3}). As a consequence, within this family every
$2$-positive map is automatically completely positive and therefore
trivially decomposable, in agreement
with~\cite{sanpera2001schmidt,yang2016all}.
\end{remark}

\section{Exact positivity regions}
\label{sec:positivity}

We now characterize positivity. Since positivity of
the map $\Phi$ is equivalent to block-positivity of its Choi matrix
$C_\Phi$, the problem reduces to the analysis of the associated
Hermitian biquadratic form. As shown in \cref{app4}, this condition
can be written as the requirement that
\begin{align}
Q(x,y) &= (a |x_1|^2 + b |x_2|^2 + c |x_3|^2)\, |y_1|^2
 + 2 \Re \big(w x_1 x_2^* y_1 y_2 ^* \big)
 + (c |x_1|^2 + a |x_2|^2 + b |x_3|^2)\, |y_2|^2 \nonumber \\
&\quad + (b |x_1|^2 + c |x_2|^2 + a |x_3|^2)\, |y_3|^2
 + 2 \Re \big(z x_1 x_3 ^* y_1^* y_3\big) 
\label{eq:bifo}
\end{align}
be non-negative for all $x,y\in\mathbb{C}^3$.

Equivalently,
\[
Q(x,y)=y^\dag M(x)\,y \ ,
\]
with
\begin{equation}
M(x)=
\begin{pmatrix}
a |x_1|^2 + b |x_2|^2 + c |x_3|^2 & w^* x_1^* x_2 & z ^* x_1 x_3^*\\
w x_1 x_2^* & c |x_1|^2 + a |x_2|^2 + b |x_3|^2 & 0\\
z x_1^* x_3 & 0 & b |x_1|^2 + c |x_2|^2 + a |x_3|^2
\end{pmatrix}\;,
\end{equation}
so that $Q(x,y)\ge0$ for all $y$ if and only if $M(x)\succeq0$ for all $x\neq0$.

Applying Sylvester's criterion~\cite{horn2012matrix}, positivity is therefore equivalent to the non-negativity of all principal minors of $M(x)$. Introducing the simplex variables
\begin{equation}
u_i:=\frac{|x_i|^2}{\sum_j |x_j|^2} \ ,
\qquad u_i\ge0 \ ,
\qquad \sum_{i=1}^3 u_i=1 \ ,
\end{equation}
the conditions become
\begin{align}
 &\textbf{(i)}\quad M_{11}=a u_1+b u_2+c u_3\ge 0\ , &
 &\textbf{(ii)}\quad M_{22}=c u_1+a u_2+b u_3\ge 0\ , \nonumber\\
 &\textbf{(iii)}\quad M_{33}=b u_1+c u_2+a u_3\ge 0\ , &
 &\textbf{(iv)}\quad \det M = M_{22}M_{33}S(u_1,u_2)\ge 0\ , \nonumber\\
 &\textbf{(v)}\quad M_{11}M_{22}-|w|^2u_1u_2\ge 0\ , &
 &\textbf{(vi)}\quad M_{11}M_{33}-|z|^2u_1u_3\ge 0\ ,
 \label{eq:eqspos}
\end{align}
where
\begin{equation}
S(u_1,u_2):=M_{11}-\frac{|w|^2u_1u_2}{M_{22}}-\frac{|z|^2u_1(1-u_1-u_2)}{M_{33}}\ ,
\qquad
u_3:=1-u_1-u_2\ .
\label{eq:sui}
\end{equation}

In the following we assume $a,b,c\ge0$. Under this assumption, it is enough to require $S(u_1,u_2)\ge0$ throughout the simplex. Indeed,
\begin{equation}
M_{22}S(u_1,u_2)=\bigl(M_{11}M_{22}-|w|^2u_1u_2\bigr)-\frac{M_{22}|z|^2u_1u_3}{M_{33}}\ge0\ ,
\end{equation}
which implies condition~\textbf{(v)}; multiplying instead by $M_{33}$ shows that condition~\textbf{(vi)} follows as well. We therefore focus on the sign of $S(u_1,u_2)$.

To obtain a fully analytical treatment, we specialize to three distinguished parametric families.

\subsection{\texorpdfstring{Case 1: $a\geq 0$, $b=c\geq 0$ and $w=z=1$}{Case 1: $a\ge0$, $b=c\ge0$, $w=z=1$}}

\begin{proposition}
\label{prop:pos_case1}
Consider Case~1, namely $a\ge0$, $b=c\ge0$, and $w=z=1$. Then the map $\Phi$ is positive if and only if
\begin{equation}
b\ge b_{\min}(a) \ ,
\qquad
b_{\min}(a)=
\begin{cases}
1-a \,, & 0\le a\le \tfrac12 \,,\\[1mm]
1+\dfrac{a-\sqrt{9a^2+8a}}{4} \,, & \tfrac12\le a \le \sqrt2 \,,\\[2mm]
0 \,, & a\ge \sqrt2 \,.
\end{cases}
\label{eq:prop_bmin_case1}
\end{equation}
Equivalently, the positivity boundary is exactly the graph of $b_{\min}(a)$.
\end{proposition}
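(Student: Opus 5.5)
Since $a,b\ge0$, the preceding discussion lets me reduce positivity to the single condition $S(u_1,u_2)\ge0$ on the whole simplex. With $b=c$ and $w=z=1$ the diagonal entries become $M_{ii}=b+p\,u_i$ with $p:=a-b$. Multiplying by $u_1$ is harmless: at $u_1=0$ the condition is trivial. So positivity is equivalent to
\begin{equation*}
b+p\,u_1\;\ge\;u_1\bigl(f(u_2)+f(u_3)\bigr),\qquad f(u):=\frac{u}{b+p\,u},
\end{equation*}
for all $u_1\in[0,1]$ and $u_2+u_3=t:=1-u_1$. I will also use monotonicity in $b$: raising $b$ (with $a$ fixed) adds $\delta\,(u_2+u_3)\ge0$ to $M_{11}$ and enlarges $M_{22},M_{33}$, so $S$ increases. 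Hence the positive region in the $(a,b)$ plane lies above a graph $b\ge b_{\min}(a)$, and it suffices to find the critical $b$ for each $a$.

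The first step is to maximize the inner sum over the split of $t$ into $u_2+u_3$. I expect
\begin{equation*}
f''(u)=-\frac{2bp}{(b+pu)^3}.
\end{equation*}
Thus $f$ is concave when $a\ge b$, and then the maximum is at $u_2=u_3=t/2$, with value $t/(b+pt/2)$. When $a<b$, $f$ is convex, and the maximum is at an endpoint, say $u_2=t$, $u_3=0$, with value $t/(b+pt)$. The case $b=0$ I will handle directly as the limit, being careful with vanishing denominators. Each regime then reduces to a one-variable polynomial inequality in $u_1\in[0,1]$.

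\emph{Convex regime ($a<b$).} The condition reads
\begin{equation*}
(b+pu_1)(b+p(1-u_1))-u_1(1-u_1)=ab+(p^2-1)\,u_1(1-u_1)\ge0.
\end{equation*}
It is worst at $u_1=1/2$, giving $4ab+p^2-1\ge0$, i.e.\ $(a+b)^2\ge1$, i.e.\ $b\ge1-a$. This regime is consistent only when $1-a>a$, i.e.\ $a\le\tfrac12$, which yields the first branch.

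\emph{Concave regime ($a\ge b$).} The condition is
\begin{equation*}
g(u_1):=(b+pu_1)\bigl(b+\tfrac p2(1-u_1)\bigr)-u_1(1-u_1)\ge0 \quad\text{on }[0,1].
\end{equation*}
This is a quadratic in $u_1$ whose leading coefficient is $1-p^2/2$. I will locate its minimizer over $[0,1]$, using the endpoints $g(0)=b(b+p/2)\ge0$ and $g(1)=ab\ge0$ when the vertex falls outside. Setting the interior minimum to zero (a discriminant condition) and solving for $b$ should give
\begin{equation*}
b=1+\frac{a-\sqrt{9a^2+8a}}{4}.
\end{equation*}

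The main obstacle is this concave-regime bookkeeping. I must verify that the vertex lies in $(0,1)$ and that $g$ is convex there over the relevant range, that the resulting root satisfies $0\le b\le a$ exactly for $\tfrac12\le a\le\sqrt2$, and that it is the larger relevant root. For the endpoints, at $b=0$ the condition is $a^2u_1t/2\ge u_1t$, i.e.\ $a\ge\sqrt2$. This gives the third branch and matches $b_{\min}(\sqrt2)=0$. At $a=\tfrac12$ the middle formula gives $\tfrac12=1-a$, so the branches glue continuously.

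Finally, I will check sufficiency: for $b\ge b_{\min}(a)$ the inequality holds in whichever regime $(a,b)$ falls. This covers in particular $a\le\tfrac12$ with $b$ between $a$ and $1-a$, which is excluded automatically, and $a>\tfrac12$ with $b>a$, where $ab+(p^2-1)/4\ge0$ follows from $a+b\ge1$. Together with monotonicity in $b$, this shows the boundary is exactly the graph of $b_{\min}$.
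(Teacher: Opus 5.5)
Your proposal is correct and follows essentially the paper's route: you reduce positivity to $S\ge0$, use the concavity or convexity of $u\mapsto u/(b+pu)$ to place the worst split of $u_2+u_3$ at the symmetric point ($a\ge b$) or at the boundary ($a<b$), and analyse the resulting quadratics in $u_1$, which coincide with the paper's numerators $N(t)$ up to a constant factor. Your explicit monotonicity in $b$ and separate treatment of $b=0$ make the gluing of the regimes cleaner than in the paper, and the discriminant step you flag closes neatly because the condition factors as $\bigl(a^2+ab-2b^2-2-4b\bigr)\bigl(a^2+ab-2b^2-2+4b\bigr)\le0$, with the second factor giving $b=1+\bigl(a-\sqrt{9a^2+8a}\bigr)/4$.
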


\begin{proof}
Under these assumptions, the potentially critical eigenvalues of the Choi matrix reduce to $\gamma_{1,2}=a\pm1$ and $\gamma_{8,9}=b\pm1$. Hence $C_\Phi$ is not positive semidefinite, and therefore $\Phi$ is not completely positive, whenever either $0\le a<1$ or $0\le b<1$.

The function~\eqref{eq:sui} becomes
\begin{equation}
S(u_1,u_2)=(a-b)u_1+b-u_1\left[\frac{u_2}{b+(a-b)u_2}+\frac{u_3}{b+(a-b)u_3}\right].
\label{eq:S_case1}
\end{equation}
The positivity boundary is obtained by finding the smallest value $b_{\min}(a)$ such that $S(u_1,u_2)\ge0$ throughout the simplex.

Fix $u_1=t\in(0,1]$ and regard $S(t,u_2)$ as a function of $u_2\in[0,1-t]$. Its first two derivatives are
\begin{align}
\frac{\partial S}{\partial u_2}
&=-b u_1 \left[ \frac{1}{\bigl(b+(a-b)u_2\bigr)^2}
- \frac{1}{\bigl(b+(a-b)(1-u_1-u_2)\bigr)^2} \right] \,,
\label{eq:du_case1}\\
\frac{\partial^2 S}{\partial u_2^2}
&= 2b(a-b)u_1 \left[ \frac{1}{\bigl(b+(a-b)u_2\bigr)^3}
+ \frac{1}{\bigl(b+(a-b)(1-u_1-u_2)\bigr)^3} \right].
\label{eq:d2u_case1}
\end{align}
Thus, for fixed $t$, the minimizer is the symmetric point $u_2=u_3=(1-t)/2$ when $a\ge b>0$, whereas for $0<a<b$ the minimum is attained at a boundary point, say $u_2=1-t$.

\subparagraph*{\textbf{Subcase $\bm{a\ge b>0}$.}}
Evaluating at the symmetric point gives
\begin{equation}
S\!\left(t,\frac{1-t}{2}\right)=\frac{N(t)}{D(t)}\,,
\qquad t\in[0,1]\,,
\label{eq:S_case1_symm}
\end{equation}
where
\begin{align}
N(t) &= [2-(a-b)^2]t^2 + (a^2-ab-2)t + b(a+b)\,,
\label{eq:N_case1}\\
D(t) &= (b-a)t + a + b\,.
\label{eq:D_case1}
\end{align}
Since $D(t)\ge D(1)=2b>0$, the sign of $S$ is determined by the quadratic numerator $N(t)$. Moreover, $N(0)=b(a+b)>0$ and $N(1)=2ab>0$.

If $a-b\ge\sqrt{2}$, then $N$ is concave, and positivity at the endpoints implies $N(t)>0$ for all $t\in[0,1]$. If instead $0\le a-b<\sqrt{2}$, then $N$ is convex, and positivity on the whole interval holds provided either the vertex lies outside $[0,1]$ or inside with $N(t^*)\geq 0$.  These conditions reduce to
\begin{equation}
b>1+\frac{a-\sqrt{9a^2+8a}}{4}\,,
\qquad 0\le a-b<\sqrt{2}\,.
\label{eq:bound_ab_new}
\end{equation}

\subparagraph*{\textbf{Subcase $\bm{b>a>0}$.}}
At the boundary one has
\begin{equation}
S(t,1-t)=\frac{N(t)}{b+(a-b)(1-t)}\,,
\qquad t\in[0,1]\,,
\label{eq:S_case1_bdry}
\end{equation}
with
\begin{equation}
N(t)=[1-(b-a)^2](t^2-t)+ab\,.
\label{eq:N_case1_bdry}
\end{equation}
The denominator is strictly positive on $[0,1]$, so positivity is again determined by $N(t)$.

If $b-a\ge1$, then $N(t)\ge ab>0$. If $0<b-a<1$, then $N$ is convex and its minimum is attained at $t=\tfrac12$, where
\begin{equation}
N\!\left(\tfrac12\right)=\frac{(a+b)^2-1}{4}\,.
\end{equation}
Therefore positivity is equivalent to $b>1-a$.

Combining the two regimes, the positivity boundary is the piecewise function
\begin{equation}
b_{\min}(a)=
\begin{cases}
1-a \,, & 0\le a\le \tfrac12 \,,\\[1mm]
1+\dfrac{a-\sqrt{9a^2+8a}}{4} \,, & \tfrac12\le a \le \sqrt2 \,,\\[2mm]
0 \,, & a\ge \sqrt2 \,,
\end{cases}
\label{eq:bmin_case1}
\end{equation}
and the map is positive if and only if $b\ge b_{\min}(a)$.

The complete-positivity region is defined by $a\ge1$ and $b\ge1$; see \cref{fig:positivity_region} for the resulting geometry.
\end{proof}

\begin{figure}[hbt]
 \includegraphics[width=0.4\textwidth]{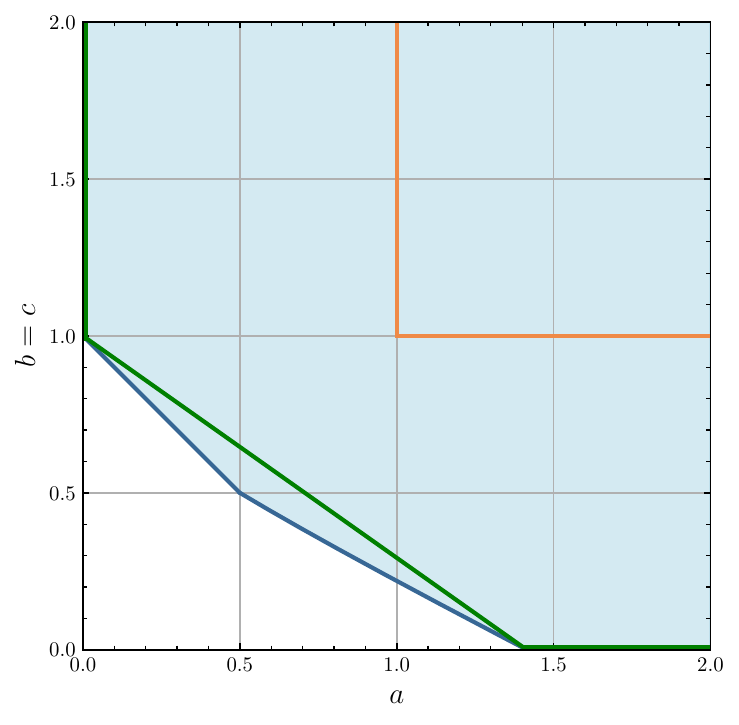}
 \caption{Boundaries for different properties of the map in
   \cref{eq:map}, for the case $b=c\ge 0$ and $w=z=1$. The blue-shaded
   region above the blue boundary corresponds to positivity [see
     \cref{eq:bmin_case1}]; the orange boundary delimits complete
   positivity; the area above the green segment corresponds to
   decomposability.}
 \label{fig:positivity_region}
\end{figure}

\begin{remark}
The Choi matrix of the transposition map has three negative eigenvalues equal to $-1$ in dimension $d=3$. By contrast, within the present class, the Choi matrices of positive but not completely positive maps have at most two negative eigenvalues.
\end{remark}

\subsection{\texorpdfstring{Case 2: $0\leq a\leq 1$, $b=c=(1-a)/2$ and $w=z\geq 0$}{Case 2: $0\le a\le 1$, $b=c=(1-a)/2$, $w=z\ge0$}}

\begin{proposition}
\label{prop:pos_case2}
Consider Case~2, namely $0\le a\le1$, $b=c=(1-a)/2$, and $w=z\ge0$. Then the map $\Phi$ is positive if and only if
\begin{equation}
0\le w\le w_{\max}(a) \ ,
\qquad
w_{\max}(a)=
\begin{cases}
\dfrac{1+a}{2} \,, & 0\le a<\tfrac13 \,,\\[2mm]
\dfrac{1-a+\sqrt{a+a^2}}{2} \,, & \tfrac13\le a\le1 \,.
\end{cases}
\label{eq:prop_wmax_case2}
\end{equation}
Equivalently, the positivity region is the region below the graph of $w_{\max}(a)$.
\end{proposition}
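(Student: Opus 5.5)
The plan is to reuse the reduction established above. Since $a,b,c\ge0$, conditions \textbf{(i)}--\textbf{(iii)} hold automatically, and \textbf{(v)}, \textbf{(vi)} follow from $S\ge0$. Conversely, whenever $M_{22},M_{33}>0$, condition \textbf{(iv)} is equivalent to $S\ge0$. So positivity is equivalent to $S(u_1,u_2)\ge0$ on the simplex, extended by continuity to points where $M_{22}$ or $M_{33}$ vanishes. The degenerate endpoint $a=1$, $b=0$ will be treated separately by a limiting argument.

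\textbf{Step 1: reduce $S$.} Substituting $b=c=(1-a)/2$ and $w=z$ gives $M_{22}=b+(a-b)u_2$, $M_{33}=b+(a-b)u_3$ and $M_{11}=b+(a-b)u_1$. With $t=u_1$, this yields
\[
S=b+(a-b)t-w^2t\left[\frac{u_2}{b+(a-b)u_2}+\frac{u_3}{b+(a-b)u_3}\right].
\]
This is exactly the structure of \cref{eq:S_case1}, with the coupling $1$ replaced by $w^2$. The key observation is that $a-b=(3a-1)/2$, so the sign of $a-b$ changes precisely at $a=\tfrac13$, which is the breakpoint in \cref{eq:prop_wmax_case2}.

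\textbf{Step 2: locate the minimiser for fixed $t$.} I will recompute the $u_2$-derivatives as in \cref{eq:du_case1,eq:d2u_case1}; each now carries an extra factor $w^2$. This gives two regimes:
\begin{itemize}
\item for $a\ge\tfrac13$ (so $a\ge b$), $S$ is convex in $u_2$ and the minimiser is the symmetric point $u_2=u_3=(1-t)/2$;
\item for $a<\tfrac13$, $S$ is concave in $u_2$ and the minimum sits at the edge $u_2=1-t$ (or the symmetric edge $u_3=1-t$).
\end{itemize}

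\textbf{Step 3: the regime $a<\tfrac13$.} On the edge, $S(t,1-t)$ has a positive denominator, and its numerator is
\[
N(t)=\bigl[w^2-(b-a)^2\bigr](t^2-t)+\bigl(b+(a-b)t\bigr)\bigl(b+(a-b)(1-t)\bigr)\cdot(\text{normalisation}),
\]
which I will simplify to a quadratic symmetric about $t=\tfrac12$. I will check that:
\begin{itemize}
\item its endpoint values are nonnegative;
\item when it is convex, its minimum is at $t=\tfrac12$, where $S(\tfrac12,\tfrac12)=\tfrac{a+b}{2}-\tfrac{w^2}{2(a+b)}$;
\item when it is concave, positivity is automatic from the endpoints, and this case only occurs for $w$ below the relevant threshold.
\end{itemize}
Hence positivity is equivalent to $w\le a+b=(1+a)/2$.

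\textbf{Step 4: the regime $a\ge\tfrac13$.} Evaluating at the symmetric point gives $S=N(t)/D(t)$ with $D(t)=(b-a)t+a+b>0$ and
\[
N(t)=\bigl[2w^2... \bigr]
\]
replaced by the analogue of \cref{eq:N_case1}: a quadratic in $t$ whose leading coefficient is $2w^2-(a-b)^2$ (up to the convention used there). Its endpoint values are $N(0)=b(a+b)$ and $N(1)=2ab$, both nonnegative. Then:
\begin{itemize}
\item if $N$ is concave, nonnegativity follows from the endpoints;
\item if $N$ is convex, I need either the vertex to lie outside $[0,1]$, or the discriminant condition $N(t^*)\ge0$ at the vertex $t^*$.
\end{itemize}
Solving $N(t^*)=0$ for $w$ with $b=(1-a)/2$ should give $w=\bigl(1-a+\sqrt{a+a^2}\bigr)/2$. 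As a consistency check, at $a=\tfrac13$ this equals $\tfrac23=(1+a)/2$, so the two branches of $w_{\max}$ match.

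\textbf{Main obstacle.} I expect the hardest part to be this last algebraic step. One must simplify the discriminant with $b=(1-a)/2$ substituted so that it factors to give exactly the radical $\sqrt{a+a^2}$. One must also verify that for $w$ near $w_{\max}$ the vertex genuinely lies inside $(0,1)$, so that the vertex condition, and not an endpoint, is the binding constraint.

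I would first write $N(t^*)=N(0)-\beta^2/(4\alpha)$, where $\alpha$ and $\beta$ are the quadratic and linear coefficients. This turns the condition into $4\alpha\,b(a+b)\ge\beta^2$, a polynomial inequality in $w^2$ that should be quadratic in $w$ after taking square roots, with roots reading off as $w_{\max}$.

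Finally, continuity in the parameters handles $a=1$ (where $b=0$) and the borderline cases, so the positive region is closed and equals $0\le w\le w_{\max}(a)$.
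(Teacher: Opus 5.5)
Your proposal is correct and follows essentially the paper's own route: reduction to $S$ on the simplex, convexity versus concavity in $u_2$ governed by the sign of $a-b=(3a-1)/2$, the edge $u_2=1-t$ giving $(1+a)/2$ for $a<\tfrac13$, and the symmetric-point quadratic with its vertex condition giving $(1-a+\sqrt{a+a^2})/2$ for $a\ge\tfrac13$. Two small corrections. First, the edge numerator is $ab+\bigl[w^2-(a-b)^2\bigr](t^2-t)$; your displayed sum double-counts the $(a-b)^2t(1-t)$ term. Second, with $W=2w^2$ the vertex condition reads $W^2-(2a^2-a+1)W+(3a-1)^2/4\le0$, whose discriminant $4a(1+a)(1-a)^2$ produces your radical; the smaller root is harmless, because whenever $N$ is convex and $W$ lies below that root, the vertex has $t^*<0$.
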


\begin{proof}
In this case the map is trace-preserving. The complete-positivity conditions reduce to
\begin{equation}
 w \le a \,, \quad \text{for } a \in [0,1/3]\,,
 \qquad
 w \le \frac{1-a}{2} \,, \quad \text{for } a \in [1/3,1]\,.
\end{equation}

The function~\eqref{eq:sui} becomes
\begin{equation}
S(u_1,u_2)=\frac{3a-1}{2}u_1+\frac{1-a}{2}-2w^2u_1\left[\frac{u_2}{1-a+(3a-1)u_2}+\frac{u_3}{1-a+(3a-1)u_3}\right].
\label{eq:S_case2}
\end{equation}
The positivity boundary is obtained by finding the largest value $w_{\max}(a)$ such that $S(u_1,u_2)\ge0$ throughout the simplex.

Fix $u_1=t\in(0,1]$ and regard $S(t,u_2)$ as a function of $u_2\in[0,1-t]$. Its first and second derivatives are
\begin{align}
\frac{\partial S}{\partial u_2}
&= -2w^2(1-a)u_1\left[\frac{1}{\bigl(1-a+(3a-1)u_2\bigr)^2}-\frac{1}{\bigl(1-a+(3a-1)(1-u_1-u_2)\bigr)^2}\right] \,,
\label{eq:du_case2}\\
\frac{\partial^2 S}{\partial u_2^2}
&= 4w^2(1-a)(3a-1)u_1\left[\frac{1}{\bigl(1-a+(3a-1)u_2\bigr)^3}+\frac{1}{\bigl(1-a+(3a-1)(1-u_1-u_2)\bigr)^3}\right].
\label{eq:d2u_case2}
\end{align}
Thus, for fixed $t$, the minimizing configuration is the symmetric point $u_2=u_3=(1-t)/2$ when $1/3\le a<1$, whereas the minimum is attained at the boundary $u_2=1-t$ for $0<a<1/3$.

\subparagraph*{\textbf{Subcase $\bm{1/3\le a<1}$.}}
At the symmetric point one finds
\begin{equation}
S\!\left(t,\frac{1-t}{2}\right)=\frac{N(t)}{D(t)} \,,
\qquad t\in[0,1] \,,
\label{eq:S_case2_symm}
\end{equation}
with
\begin{align}
N(t) &= [8w^2-(1-3a)^2 ]t^2 -2[4w^2+a(1-3a)]t +(1-a^2) \,,
\label{eq:N_case2}\\
D(t) &= 2[(1-3a)t+a+1] \,.
\label{eq:D_case2}
\end{align}
For $1/3\le a<1$ one has $D(t)>0$ for all $t\in[0,1]$, so the sign of $S$ is determined by the quadratic numerator.

If $w\le (3a-1)/(2\sqrt2)$, then $N$ is concave and positivity at the endpoints implies $N(t)>0$ on $[0,1]$. If $w>(3a-1)/(2\sqrt2)$, then $N$ is convex and positivity holds provided either the vertex lies outside $[0,1]$ or inside with $N(t^*)\geq 0$. This yields
\begin{equation}
w_{\max}(a)=\frac{1-a+\sqrt{a+a^2}}{2} \,,
\qquad \frac13\le a\le1 \,.
\label{eq:b1}
\end{equation}

\subparagraph*{\textbf{Subcase $\boldsymbol{0<a<1/3}$.}}
At the boundary one has
\begin{equation}
S(t,1-t)=\frac{N(t)}{D(t)} \,,
\qquad t\in[0,1] \,,
\label{eq:S_case2_bdry}
\end{equation}
with
\begin{align}
N(t) &= [4w^2 -(1-3a)^2](t^2-t)+2a(1-a) \,,
\label{eq:N_case2_bdry}\\
D(t) &= 2[(1-3a)t+2a] \,.
\label{eq:D_case2_bdry}
\end{align}
Again, $D(t)>0$ for all $t\in[0,1]$.

If $w<(1-3a)/2$, then $N(t)\ge2a(1-a)>0$. If $w\ge(1-3a)/2$, then $N$
is convex and its minimum is attained at $t=\tfrac12$, so positivity
is equivalent to
\begin{equation}
w_{\max}(a)=\frac{1+a}{2} \,,
\qquad 0\le a<\frac13 \,.
\label{eq:b2}
\end{equation}

In summary, the positivity boundary is the piecewise function
\begin{equation}
w_{\max}(a)=
\begin{cases}
\dfrac{1+a}{2} \,, & 0\le a<\tfrac13 \,,\\[2mm]
\dfrac{1-a+\sqrt{a+a^2}}{2} \,, & \tfrac13\le a\le1 \,,
\end{cases}
\label{eq:wmax_case2}
\end{equation}
and the map is positive if and only if $0\le w\le w_{\max}(a)$. The resulting geometry is shown in \cref{fig:boundary2}.
\end{proof}

\begin{figure}[htb]
\includegraphics[width=0.4\textwidth]{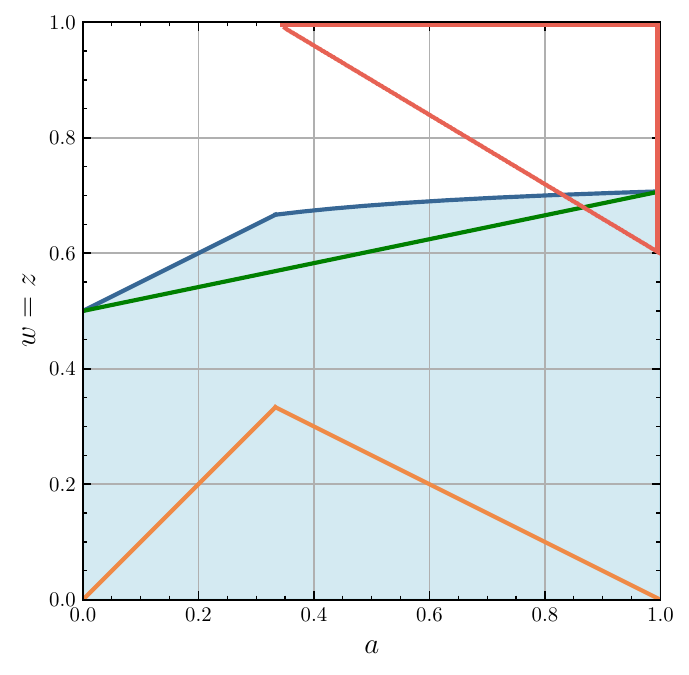}
\caption{Boundaries for different properties of the map in \cref{eq:map}, for the case $b=c=(1-a)/2$ and $w=z \geq 0$. The blue-shaded region below the blue roof indicates positivity [see \cref{eq:wmax_case2}]; the orange boundary delimits complete positivity; the area below the green segment corresponds to decomposability; the region inside the red triangle shows where the bound in \cref{eq:bb} is violated.}
\label{fig:boundary2}
\end{figure}

\begin{remark}
For $w\neq z$, the two branches forming the blue roof of the positivity region generally have different shapes, and their junction shifts toward larger values of $a$. Numerical examples are shown in \cref{fig:general_positivity}, where we define $p\equiv z/w$. The linear branch is given by
\begin{equation}
 w_{\max}(a)=\frac{1+a}{2\max\{1,p\}} \,,
 \label{eq:general_boundary}
\end{equation}
and at $a=1$ one finds $w_{\max}(1)=1/\sqrt{1+p^2}$.
\end{remark}

\begin{figure}[htb]
 \centering
 \begin{minipage}[b]{0.42\textwidth}
 \centering
 \includegraphics[width=\textwidth]{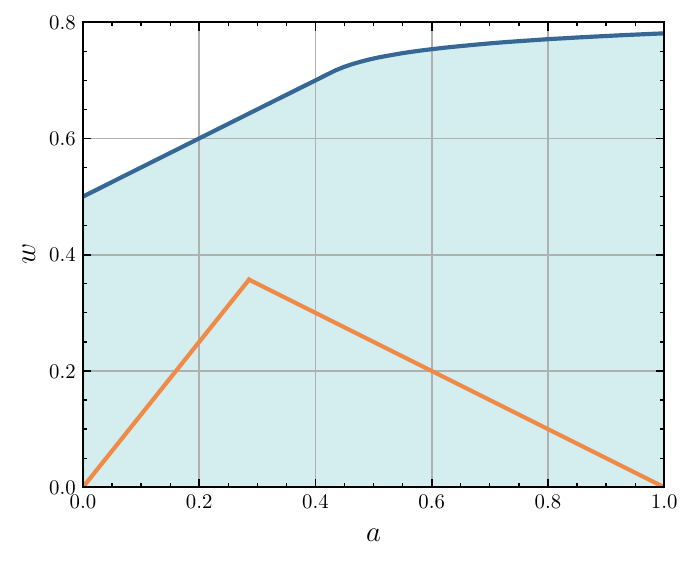}
 \\ \textbf{(a) $p=0.8$}
 \end{minipage}
 \hfill
 \begin{minipage}[b]{0.42\textwidth}
 \centering
 \includegraphics[width=\textwidth]{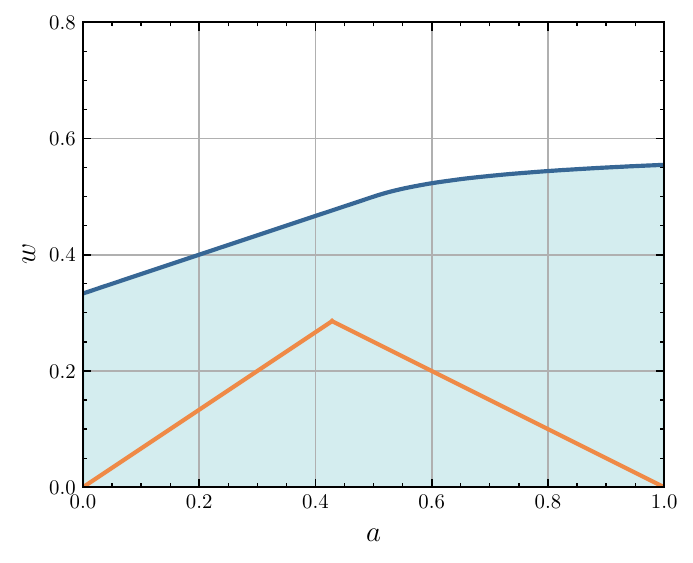}
 \\ \textbf{(b) $p=1.5$}
 \end{minipage}
 \caption{Positivity region for the map $\Phi$ in \cref{eq:map}, with $b=c=(1-a)/2$, $w,z\ge 0$, for two different ratios $p=z/w$. The blue roof represents the upper boundary of the positivity region, while the orange triangle marks complete positivity.}
 \label{fig:general_positivity}
\end{figure}

\subsection{\texorpdfstring{Case 3: $0\leq a=c\leq 1/2$, $b=1-2a$, and $w=z\geq 0$}{Case 3: $0\le a=c\le 1/2$, $b=1-2a$, $w=z\ge0$}}

\begin{proposition}
\label{prop:pos_case3}
Consider Case~3, namely $0\le a=c\le1/2$, $b=1-2a$, and $w=z\ge0$. Then the map $\Phi$ is positive if and only if
\begin{equation}
0\le w\le w_{\max}(a) \ ,
\qquad
w_{\max}(a)=
\begin{cases}
a+\sqrt{a-2a^2} \,, & 0\le a\le \tfrac13 \,,\quad a_0\le a\le \tfrac12 \,,\\[1mm]
w_{\mathrm{int}}(a) \,, & \tfrac13\le a\le a_0 \,,
\end{cases}
\label{eq:prop_wmax_case3}
\end{equation}
where $w_{\mathrm{int}}(a)$ is the interior branch described in \cref{app:caseCinterior}, and $a_0\in(1/3,1/2)$ is determined by
\[
w_{\mathrm{int}}(a_0)=a_0+\sqrt{a_0-2a_0^2}.
\]
Numerically, $a_0\simeq 0.43381$. Equivalently, the positivity boundary is the lower envelope of the boundary branch $a+\sqrt{a-2a^2}$ and the interior branch $w_{\mathrm{int}}(a)$.
\end{proposition}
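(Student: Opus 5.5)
We follow the scheme used for Propositions~\ref{prop:pos_case1} and~\ref{prop:pos_case2}: positivity is equivalent to $S(u_1,u_2)\ge0$ on the simplex. For $a=c$, $b=1-2a$, $w=z$ we have $M_{11}=a(u_1+u_3)+bu_2=a+(b-a)u_2$, $M_{22}=a+(b-a)u_3$ and $M_{33}=a+(b-a)u_1$. Hence
\begin{equation*}
S(u_1,u_2)=a+(1-3a)u_2-w^2u_1\left[\frac{u_2}{a+(1-3a)u_3}+\frac{u_3}{a+(1-3a)u_1}\right].
\end{equation*}
Unlike Cases~1 and~2, the diagonal entries $M_{22}$ and $M_{33}$ now involve different variables. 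So the reduction to a one-parameter problem through the symmetric point $u_2=u_3$ is no longer available, and we must minimise $S$ genuinely over two variables. Since $S$ is affine-decreasing in $w^2$, the quantity $w_{\max}(a)^2$ equals $\min_{u}\,[a+(1-3a)u_2]/G(u)$, where $G$ is the bracket multiplied by $u_1$. We therefore study the minimum of this ratio and split it into boundary and interior contributions.

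\textbf{Boundary of the simplex.} For $u_1=0$ we get $S>0$, and for $u_2=0$ or $u_3=0$ only one coupling survives. On the edge $u_2=0$ we have $u_3=1-t$ and
\begin{equation*}
S=a-w^2\,\frac{t(1-t)}{a+(1-3a)t}.
\end{equation*}
Positivity then means
\begin{equation*}
w^2\le\min_t \frac{a\,[a+(1-3a)t]}{t(1-t)}.
\end{equation*}
Minimising this rational function of $t$ is elementary, with the critical point solving a quadratic. I expect it to produce exactly $w^2\le\bigl(a+\sqrt{a-2a^2}\bigr)^2$. As a check, at $a=1/3$ the minimiser is $t=1/2$ and the bound is $w^2=4/9$, which matches $w_{\max}(1/3)=2/3$. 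The edge $u_3=0$ gives the analogous function
\begin{equation*}
\frac{[a+(1-3a)u_2]\,a}{u_1u_2},
\end{equation*}
which also has a Case-1-type minimum. One should check that it is never smaller than the $u_2=0$ bound. This is the boundary branch $a+\sqrt{a-2a^2}$.

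\textbf{Interior critical points.} Next we set $\partial_{u_1}S=\partial_{u_2}S=0$, writing $u_3=1-u_1-u_2$, or equivalently use Lagrange conditions on the ratio. Clearing denominators gives a polynomial system. Eliminating one variable, for instance by a resultant, together with the condition $S=0$ yields an algebraic relation $P(a,w)=0$ whose relevant root defines $w_{\mathrm{int}}(a)$. This is the content deferred to \cref{app:caseCinterior}.

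Two further things must be shown. First, for $a\le1/3$ no interior critical point lies inside the simplex with a value below the boundary bound. I would argue this by sign analysis of the Hessian or of $\partial_{u_2}S$ using $1-3a\ge0$. Second, for $a>1/3$ the interior minimiser exists and beats the boundary branch exactly on $[1/3,a_0]$. Here $a_0$ is the crossing point of the two branches, found numerically to be $a_0\simeq0.43381$. Beyond $a_0$, the interior critical point either leaves the simplex or gives a larger value.

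The final bound is the lower envelope of the two branches. Sufficiency follows because a continuous function on a compact simplex attains its minimum either on the boundary or at an interior critical point, and both possibilities have been enumerated.

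\textbf{Main obstacle.} The hardest step is the interior analysis. The critical-point system is genuinely two-dimensional and non-symmetric, so $w_{\mathrm{int}}$ has no simple closed form. I would first try the substitution $s=a+(1-3a)u_1$, $r=a+(1-3a)u_3$ to linearise the denominators, and then eliminate to a single polynomial in $(a,w)$. Controlling which root is the global minimum, and locating $a_0$, will probably require a combination of discriminant arguments and numerics.
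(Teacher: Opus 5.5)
Your plan is essentially the paper's proof. It uses the same reduction to $S\ge0$ on the simplex and obtains the boundary branch $a+\sqrt{a-2a^2}$ from the edges $u_2=0$ and $u_3=0$; these two edges are in fact the same one-variable problem after swapping $u_1\leftrightarrow u_2$, so no separate comparison is needed. For $a\le1/3$ both arguments use concavity of $S$ in $u_2$ at fixed $u_1$ to exclude interior minima, and the interior branch comes from an elimination/resultant computation with $a_0$ located numerically. The difference is only organisational: the paper eliminates $w^2$ slice-wise from $S=0$ and $\partial_{u_2}S=0$ and then minimises a one-parameter $w^2(u)$, whereas you impose the full gradient system together with $S=0$, which encodes the same conditions.
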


\begin{proof}
Under these assumptions,
\begin{equation}
S(u_1,u_2)=a+(1-3a)u_2-w^2u_1\left(\frac{u_2}{a+(1-3a)(1-u_1-u_2)}+\frac{1-u_1-u_2}{a+(1-3a)u_1}\right) \,.
\label{eq:S_case3}
\end{equation}
The positivity boundary is obtained by finding the largest value $w_{\max}(a)$ such that $S(u_1,u_2)\ge0$ throughout the simplex.

Fix $u_1=t\in(0,1]$ and regard $S(t,u_2)$ as a function of $u_2\in[0,1-t]$. Its first and second derivatives are
\begin{align}
\frac{\partial S}{\partial u_2}
&= 1-3a - w^2 u_1 \left[ \frac{a+(1-3a)(1-u_1)}{\bigl(a+(1-3a)(1-u_1-u_2)\bigr)^2} - \frac{1}{a+(1-3a)u_1} \right] \,,
\label{eq:du_case3}\\
\frac{\partial^2 S}{\partial u_2^2}
&= -2w^2(1-3a)u_1 \frac{a+(1-3a)(1-u_1)}{\bigl(a+(1-3a)(1-u_1-u_2)\bigr)^3} \,.
\label{eq:d2u_case3}
\end{align}

\subparagraph*{\textbf{Subcase $\boldsymbol{0<a\le 1/3}$.}}
For fixed $t$, the second derivative is negative on the simplex, so the map $u_2\mapsto S(t,u_2)$ is concave. Its minimum over $u_2\in[0,1-t]$ is therefore attained at one of the endpoints, namely $u_2=0$ or $u_2=1-t$. Accordingly,
\begin{align}
S(t,0)
&= \frac{w^2 t^2 + \bigl[(1-3a)a-w^2\bigr]t + a^2}{(1-3a)t+a} \,,
\label{eq:S_case3_t0}\\[2mm]
S(t,1-t)
&= \frac{w^2 t^2 - \bigl[(1-3a)a+w^2\bigr]t + a-a^2}{a+(1-3a)t} \,.
\label{eq:S_case3_t1}
\end{align}
Requiring both boundary expressions to be non-negative for all $t\in[0,1]$ yields the condition
\begin{equation}
w_{\max}(a)=a+\sqrt{a-2a^2} \,,
\qquad 0 \le a\le\frac13 \,.
\label{eq:wmax_case3_left}
\end{equation}

\subparagraph*{\textbf{Subcase $\boldsymbol{1/3<a<1/2}$.}}
In this regime the minimum of $S(t,u_2)$ may occur either on the boundary, which again yields the branch~\eqref{eq:wmax_case3_left}, or at an interior stationary point. The latter possibility is analysed in detail in \cref{app:caseCinterior}. Let $w_{\mathrm{int}}(a)$ denote the smallest positive value of $w$ such that
$P\bigl(a,w^2\bigr)=0$,
where $P$ is given explicitly in \cref{app:caseCinterior}. Then the positivity boundary is
\begin{equation}
w_{\max}(a)=\min\left\{a+\sqrt{a-2a^2} \,,\,w_{\mathrm{int}}(a)\right\},
\qquad \frac13 \le a \le \frac12 \,.
\label{eq:wmax_case3_full}
\end{equation}

Introducing the crossing point $a_0\in(1/3,1/2)$ defined by
$w_{\mathrm{int}}(a_0)=a_0+\sqrt{a_0-2a_0^2}$,
we may therefore write
\begin{equation}
w_{\max}(a)=
\begin{cases}
a+\sqrt{a-2a^2} \,, & 0 \le a \le \tfrac13 \,, \quad a_0 \le a \le \tfrac12 \,,\\[1mm]
w_{\mathrm{int}}(a) \,, & \tfrac13 \le a \le a_0 \,,
\end{cases}
\label{eq:wmax_case3_piecewise}
\end{equation}
with numerical value $a_0\simeq 0.43381$. The map is positive if and only if $0\le w\le w_{\max}(a)$.

The complete-positivity region is defined by
\begin{equation}
w\le a \,, \qquad w\le \sqrt{a-2a^2} \,,
\label{53}
\end{equation}
corresponding to the area below the orange curve in \cref{fig:boundary_3}.
\begin{figure}[htb]
\includegraphics[width=0.4\textwidth]{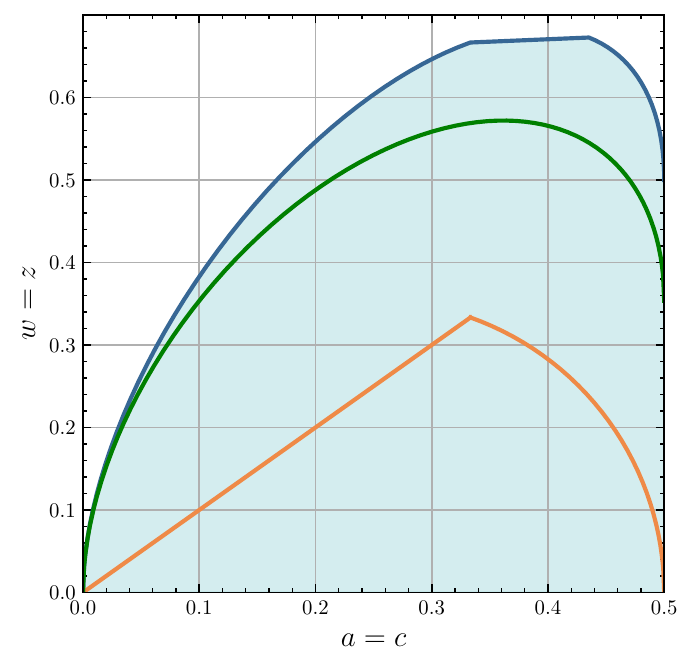}
\caption{Boundaries for different properties of the map in
  \cref{eq:map}, for the case $0 \leq a=c \leq1/2$, $b=1-2a$, and $w=z
  \geq 0$. The blue-shaded region below the blue curve indicates
  positivity [see \cref{eq:wmax_case3_piecewise}]; the orange curve
  delimits the complete-positivity region [see \cref{53}];
  the area below the green ellipse corresponds to decomposability [see
    \cref{eq:prop_dec_case3}].}
\label{fig:boundary_3}
\end{figure}
\end{proof}

\section{Decomposability thresholds and PPT witnesses}
\label{sec:decomposability}

We now characterize decomposability, i.e.\ determine when the map can
be written as the sum of a completely positive map and a completely
co-positive one. We recall that every $2$-positive map $
\Phi:\mathcal{B}(\mathbb{C}^3)\to \mathcal{B}(\mathbb{C}^3) $ is
decomposable~\cite{sanpera2001schmidt,yang2016all}. In the present
setting, however, we are concerned with maps that need only be
$1$-positive, and genuinely non-decomposable cases may therefore
occur.

A convenient criterion for detecting non-decomposability is obtained by testing the Choi matrix on PPT states. Let $C_\Phi$ be the Choi matrix associated with a positive map $\Phi$. If there exists a bipartite state $\rho$ such that
\[
\Tr[C_\Phi\,\rho]<0 \ ,
\]
then $\rho$ is necessarily entangled, because $C_\Phi$ is block-positive and therefore has non-negative expectation value on all separable states. If, in addition, $\rho$ is PPT, namely if $\rho^\tau\succeq0$ (where $\tau$ denotes partial transposition on one subsystem), then $\Phi$ must be non-decomposable.

Indeed, if $\Phi$ were decomposable, its Choi matrix could be written as
\begin{equation}
C_\Phi=P+Q^\tau\,,
\label{eq:decomp_choi}
\end{equation}
with $P,Q\succeq0$. For every PPT state $\rho$ one would then have
\begin{equation}
\Tr[C_\Phi\,\rho]
=
\Tr[P\rho]+\Tr[Q^\tau\rho]
=
\Tr[P\rho]+\Tr[Q\,\rho^\tau]
\ge 0\,,
\label{eq:ppt_test_decomp}
\end{equation}
because both $\rho$ and $\rho^\tau$ are positive semidefinite, and the trace of the product of two positive semidefinite operators is non-negative. Consequently, the existence of a PPT state $\rho$ such that $\Tr[C_\Phi\,\rho]<0$ rules out any decomposition of the form~\eqref{eq:decomp_choi}.

We therefore introduce the following PPT state of two qutrits, which serves as a convenient witness for the first two parametric families and, at a distinguished point, also for the third:
\begin{equation}
\rho _1= \frac{\sqrt{2}-1}{4}
\left(
\begin{array}{ccc|ccc|ccc}
2 & \cdot & \cdot & \cdot & \cdot & \cdot & \cdot & \cdot & -\sqrt{2} \\
\cdot & \sqrt{2} & \cdot & -\sqrt{2} & \cdot & \cdot & \cdot & \cdot & \cdot \\
\cdot & \cdot & \sqrt{2} & \cdot & \cdot & \cdot & \cdot & \cdot & \cdot \\ \hline
\cdot & -\sqrt{2} & \cdot & \sqrt{2} & \cdot & \cdot & \cdot & \cdot & \cdot \\
\cdot & \cdot & \cdot & \cdot & 1 & \cdot & \cdot & \cdot & \cdot \\
\cdot & \cdot & \cdot & \cdot & \cdot & \cdot & \cdot & \cdot & \cdot \\ \hline
\cdot & \cdot & \cdot & \cdot & \cdot & \cdot & \sqrt{2} & \cdot & \cdot \\
\cdot & \cdot & \cdot & \cdot & \cdot & \cdot & \cdot & \cdot & \cdot \\
-\sqrt{2} & \cdot & \cdot & \cdot & \cdot & \cdot & \cdot & \cdot & 1
\end{array}
\right)\,.
\label{eq:ppt}
\end{equation}
Its spectrum is
\begin{equation}
\sigma(\rho _1) =
\left\{
\frac{3}{4}(\sqrt{2}-1),\,
\frac{1}{2}(2-\sqrt{2}),\,
\frac{1}{4}(2-\sqrt{2}),\,
\frac{1}{4}(2-\sqrt{2}),\,
\frac{1}{4}(\sqrt{2}-1),\,
0,\,
0,\,
0,\,
0
\right\}\,.
\label{eq:spectrum_rho}
\end{equation}
Henceforth, the partial transposition $\tau$ is understood to act on
the second tensor factor. The partially transposed state $\rho_1^\tau$
has the same spectrum of $\rho_1$, and therefore the state is PPT.

We now reconsider the three parametric families introduced in
\cref{sec:positivity} and determine, in each case, the boundary
between decomposable and non-decomposable maps.

\subsection{Case 1: \texorpdfstring{$a\geq 0$, $b=c\geq 0$ and $w=z=1$}{Case 1: $a\ge0$, $b=c\ge0$, $w=z=1$}}

\begin{proposition}
\label{prop:dec_case1}
In Case~1, the decomposability threshold inside the positivity region is
\begin{equation}
b_{\mathrm{dec}}(a)=1-\frac{a}{\sqrt{2}}\,.
\label{eq:prop_dec_case1}
\end{equation}
Equivalently, the map is decomposable if and only if $b\ge b_{\mathrm{dec}}(a)$, and it is non-decomposable whenever $b<b_{\mathrm{dec}}(a)$.
\end{proposition}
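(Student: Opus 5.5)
The proof has two directions. Non-decomposability below the line follows from the PPT witness $\rho_1$ of \cref{eq:ppt}. Decomposability above it follows from an explicit splitting $C_\Phi=P+Q^\tau$ exploiting the sparsity of \cref{eq:ch1}. Throughout, I take $b=c$, $w=z=1$, and order the basis as $\ket{11},\ket{12},\dots,\ket{33}$.

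\emph{Non-decomposability.} I would compute $\Tr[C_\Phi\rho_1]$ directly; both matrices are sparse, so only a few terms contribute.
\begin{itemize}
\item Diagonal of $C_\Phi$: $(a,b,c,c,a,b,b,c,a)$.
\item Diagonal of $\rho_1$, up to the prefactor $(\sqrt2-1)/4$: $(2,\sqrt2,\sqrt2,\sqrt2,1,0,\sqrt2,0,1)$.
\item These pair to $4a+2\sqrt2(b+c)=4a+4\sqrt2\,b$.
\item The off-diagonal pairs $z\leftrightarrow-\sqrt2$ at positions $(1,9)$ and $w\leftrightarrow-\sqrt2$ at $(2,4)$ each appear twice. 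Together they contribute $-4\sqrt2$.
\end{itemize}
Hence
\begin{equation}
\Tr[C_\Phi\rho_1]=(\sqrt2-1)\bigl(a+\sqrt2\,b-\sqrt2\bigr)\,,
\end{equation}
which is negative exactly when $b<1-a/\sqrt2$. Since $\rho_1$ is PPT, the argument around \cref{eq:ppt_test_decomp} shows that every positive map in this range is non-decomposable.

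\emph{Decomposability for $b\ge 1-a/\sqrt2$.} If $a\ge1$ and $b\ge1$, the map is completely positive by \cref{CPcond} and there is nothing to prove. Otherwise I would build $C_\Phi=P+Q^\tau$ with $P$ and $Q$ each a direct sum of $2\times2$ blocks. The key facts are:
\begin{itemize}
\item the $z$-entry $\bra{11}C_\Phi\ket{33}$ equals $\bra{13}Q\ket{31}$ when it comes from $Q^\tau$;
\item the $w$-entry $\bra{12}C_\Phi\ket{21}$ equals $\bra{11}Q\ket{22}$ when it comes from $Q^\tau$;
\item diagonal entries are invariant under $\tau$.
\end{itemize}

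I would split $z=1=s+(1-s)$ and $w=1=t+(1-t)$, and also split the diagonal weight at $\ket{11}$ as $a=\alpha+\beta$. Then I use four blocks:
\begin{itemize}
\item a $P$-block on $\{\ket{11},\ket{33}\}$ with diagonal $(\alpha,a)$ and off-diagonal $s$;
\item a $Q$-block on $\{\ket{13},\ket{31}\}$ with diagonal $(c,b)$ and off-diagonal $1-s$;
\item a $P$-block on $\{\ket{12},\ket{21}\}$ with diagonal $(b,c)$ and off-diagonal $t$;
\item a $Q$-block on $\{\ket{11},\ket{22}\}$ with diagonal $(\beta,a)$ and off-diagonal $1-t$.
\end{itemize}
The remaining diagonal entries of $C_\Phi$ go into $P$.

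Positivity of the four blocks requires
\begin{equation}
\alpha a\ge s^2\,,\qquad b^2\ge(1-s)^2\,,\qquad b^2\ge t^2\,,\qquad \beta a\ge(1-t)^2\,.
\end{equation}
For $0\le b\le1$ I choose $s=t=1-b$, which satisfies the two middle constraints. The remaining two can then be met with $\alpha+\beta=a$ if and only if $a^2\ge 2(1-b)^2$, i.e.\ $b\ge 1-a/\sqrt2$. I would pick $\alpha=\beta=a/2$. This is exactly the claimed threshold.

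I do not foresee a genuine obstacle. The only delicate step is the bookkeeping of which matrix entries move under $\tau$. One also has to note that the diagonal entry at $\ket{11}$ is shared between the $z$-block of $P$ and the $w$-block of $Q$, and this sharing is precisely what produces the $\sqrt2$. Finally, since $1-a/\sqrt2\ge b_{\min}(a)$ from \cref{prop:pos_case1} wherever the former is nonnegative, the green segment indeed lies inside the positivity region.
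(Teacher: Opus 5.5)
Your witness computation matches the paper's, and so does the idea of splitting $C_\Phi=P+Q^\tau$ into $2\times2$ blocks. However, the parameter choice in the decomposability step fails as written.

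\textbf{The slip.} With $s=t=1-b$, the $P$-block on $\{\ket{12},\ket{21}\}$ has off-diagonal $t=1-b$. Its positivity then requires $b^2\ge(1-b)^2$, i.e.\ $b\ge 1/2$. This fails, for example, at the threshold point $(a,b)=(\sqrt2,0)$, where you would need $0\ge1$. In addition, the $Q$-block on $\{\ket{11},\ket{22}\}$ would need $\beta a\ge b^2$ rather than $\beta a\ge(1-b)^2$. So the condition your ansatz actually yields is $b\ge 1/2$ together with $a^2\ge(1-b)^2+b^2$, not $a^2\ge 2(1-b)^2$.

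\textbf{The fix.} Take $s=1-b$ and $t=b$, so that $1-s=t=b$. Then both middle constraints hold with equality. The outer two become $\alpha a\ge(1-b)^2$ and $\beta a\ge(1-b)^2$, and with $\alpha=\beta=a/2$ this is exactly $b\ge 1-a/\sqrt2$. You also skip the case $b>1$, $a<1$, since your step only treats $0\le b\le1$. There $s=0$, $t=1$ works, because it only needs $b^2\ge1$. Both cases are covered at once by $s=\max\{0,1-b\}$, $t=1-s$.

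\textbf{Relation to the paper.} Once corrected, your decomposition is not a different route; it is the paper's argument written out explicitly. The paper writes $C^{(1)}_{a,b}$ as a convex combination of $C^{(1)}_{a,0}$ with $a\ge\sqrt2$ and $C^{(1)}_{0,b}$ with $b\ge1$. It then decomposes these two extremal families with the block matrices $C_1,C_2$ and $\tilde C_1,\tilde C_2$. Your $P$ and $Q$ are exactly $\lambda C_1+(1-\lambda)\tilde C_1$ and $\lambda C_2+(1-\lambda)\tilde C_2$, with $\lambda=1-b$ for $0\le b<1$. The extremal points are $C^{(1)}_{a/(1-b),0}$ and $C^{(1)}_{0,1}$, and the requirement $a/(1-b)\ge\sqrt2$ is precisely the threshold. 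What your version adds is a direct explanation of the $\sqrt2$: the weight $a$ at $\ket{11}$ must be shared between the $z$-block of $P$ and the $w$-block of $Q$.
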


\begin{proof}
Let us denote by $C^{(1)}_{a,b}$ the Choi matrix corresponding to the choice $b=c$ and $w=z=1$. Evaluating it on the PPT state \cref{eq:ppt} gives
\begin{equation}
\Tr \bigl[ C^{(1)}_{a,b}\,\rho _1\bigr]=(\sqrt{2}-1)(a+\sqrt{2}b-\sqrt{2})\,.
\label{eq:choi_action1}
\end{equation}
Therefore, $\rho_1$ is PPT entangled, and the map is non-decomposable whenever
\[
\Tr \bigl[C^{(1)}_{a,b}\rho_1\bigr]<0 \ ,
\]
namely for
\[
b<1-\frac{a}{\sqrt{2}}\,.
\]

We now show that this is the exact boundary separating decomposable
and non-decomposable maps. The argument is based on convexity. Any
pair of parameters $(a',b')$ satisfying $b'\ge 1-a'/\sqrt{2}$ yields a
Choi matrix that can be written as a convex combination of two
extremal matrices:
\begin{equation}
C^{(1)}_{a',b'} = \lambda\, C^{(1)}_{a,0} + (1-\lambda)\, C^{(1)}_{0,b}\,,
\quad 0 \le \lambda \le 1\,,
\end{equation}
with $a\ge \sqrt{2}$ and $b\ge 1$. It is therefore enough to establish decomposability for the two extremal families $C^{(1)}_{a,0}$ and $C^{(1)}_{0,b}$.

For the first family one readily verifies that $C^{(1)}_{a,0}=C_1 +
C_2 ^\tau $, where $C_1$, $C_2$ are positive matrices for $a\ge
\sqrt{2}$, explicitly given by
\begin{equation}
C_1 =
\left(
\begin{array}{ccc|ccc|ccc}
a/2 & \cdot & \cdot & \cdot & \cdot & \cdot & \cdot & \cdot & 1 \\
\cdot & \cdot & \cdot & \cdot & \cdot & \cdot & \cdot & \cdot & \cdot \\
\cdot & \cdot & \cdot & \cdot & \cdot & \cdot & \cdot & \cdot & \cdot \\ \hline
\cdot & \cdot & \cdot & \cdot & \cdot & \cdot & \cdot & \cdot & \cdot \\
\cdot & \cdot & \cdot & \cdot & \cdot & \cdot & \cdot & \cdot & \cdot \\
\cdot & \cdot & \cdot & \cdot & \cdot & \cdot & \cdot & \cdot & \cdot \\ \hline
\cdot & \cdot & \cdot & \cdot & \cdot & \cdot & \cdot & \cdot & \cdot \\
\cdot & \cdot & \cdot & \cdot & \cdot & \cdot & \cdot & \cdot & \cdot \\
1 & \cdot & \cdot & \cdot & \cdot & \cdot & \cdot & \cdot & a
\end{array}
\right)\,,
\quad
C_2 =
\left(
\begin{array}{ccc|ccc|ccc}
a/2 & \cdot & \cdot & \cdot & 1 & \cdot & \cdot & \cdot & \cdot \\
\cdot & \cdot & \cdot & \cdot & \cdot & \cdot & \cdot & \cdot & \cdot \\
\cdot & \cdot & \cdot & \cdot & \cdot & \cdot & \cdot & \cdot & \cdot \\ \hline
\cdot & \cdot & \cdot & \cdot & \cdot & \cdot & \cdot & \cdot & \cdot \\
1 & \cdot & \cdot & \cdot & a & \cdot & \cdot & \cdot & \cdot \\
\cdot & \cdot & \cdot & \cdot & \cdot & \cdot & \cdot & \cdot & \cdot \\ \hline
\cdot & \cdot & \cdot & \cdot & \cdot & \cdot & \cdot & \cdot & \cdot \\
\cdot & \cdot & \cdot & \cdot & \cdot & \cdot & \cdot & \cdot & \cdot \\
\cdot & \cdot & \cdot & \cdot & \cdot & \cdot & \cdot & \cdot & \cdot
\end{array}
\right)
\label{eq:c1c2_1}\,.
\end{equation}

Similarly $C^{(1)}_{0,b}=\tilde C_1 + \tilde C_2 ^\tau $, where
$\tilde C_1$ and $\tilde C_2$ are positive matrices for $b\ge 1$,
given by
\begin{equation}
\tilde C_1 =
\left(
\begin{array}{ccc|ccc|ccc}
\cdot & \cdot & \cdot & \cdot & \cdot & \cdot & \cdot & \cdot & \cdot \\
\cdot & b & \cdot & 1 & \cdot & \cdot & \cdot & \cdot & \cdot \\
\cdot & \cdot & \cdot & \cdot & \cdot & \cdot & \cdot & \cdot & \cdot \\ \hline
\cdot & 1 & \cdot & b & \cdot & \cdot & \cdot & \cdot & \cdot \\
\cdot & \cdot & \cdot & \cdot & \cdot & \cdot & \cdot & \cdot & \cdot \\
\cdot & \cdot & \cdot & \cdot & \cdot & b & \cdot & \cdot & \cdot \\ \hline
\cdot & \cdot & \cdot & \cdot & \cdot & \cdot & \cdot & \cdot & \cdot \\
\cdot & \cdot & \cdot & \cdot & \cdot & \cdot & \cdot & b & \cdot \\
\cdot & \cdot & \cdot & \cdot & \cdot & \cdot & \cdot & \cdot & \cdot
\end{array}
\right)\,,
\quad
\tilde C_2 =
\left(
\begin{array}{ccc|ccc|ccc}
\cdot & \cdot & \cdot & \cdot & \cdot & \cdot & \cdot & \cdot & \cdot \\
\cdot & \cdot & \cdot & \cdot & \cdot & \cdot & \cdot & \cdot & \cdot \\
\cdot & \cdot & b & \cdot & \cdot & \cdot & 1 & \cdot & \cdot \\ \hline
\cdot & \cdot & \cdot & \cdot & \cdot & \cdot & \cdot & \cdot & \cdot \\
\cdot & \cdot & \cdot & \cdot & \cdot & \cdot & \cdot & \cdot & \cdot \\
\cdot & \cdot & \cdot & \cdot & \cdot & \cdot & \cdot & \cdot & \cdot \\ \hline
\cdot & \cdot & 1 & \cdot & \cdot & \cdot & b & \cdot & \cdot \\
\cdot & \cdot & \cdot & \cdot & \cdot & \cdot & \cdot & \cdot & \cdot \\
\cdot & \cdot & \cdot & \cdot & \cdot & \cdot & \cdot & \cdot & \cdot
\end{array}
\right)\,.
\label{eq:c1c2tilde_1}
\end{equation}

It follows that the map is decomposable throughout the region $b\ge 1-a/\sqrt{2}$, while it is non-decomposable below that line. Therefore, within the positivity region, the map is decomposable if and only if $b\ge 1-a/\sqrt{2}$.
\end{proof}

\subsection{Case 2: \texorpdfstring{$0\leq a\leq 1$, $b=c=(1-a)/2$ and $w=z\geq 0$}{Case 2: $0\le a\le 1$, $b=c=(1-a)/2$, $w=z\ge0$}}

\begin{proposition}
\label{prop:dec_case2}
In Case~2, the decomposability threshold inside the positivity region is
\begin{equation}
w_{\mathrm{dec}}(a)=\frac{\sqrt{2}-1}{2}a+\frac12\,.
\label{eq:prop_dec_case2}
\end{equation}
Equivalently, the map is decomposable if and only if $w\le w_{\mathrm{dec}}(a)$, and it is non-decomposable whenever $w> w_{\mathrm{dec}}(a)$.
\end{proposition}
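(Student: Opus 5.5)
The plan mirrors the proof of \cref{prop:dec_case1}. First I use the PPT state $\rho_1$ of \cref{eq:ppt} to prove non-decomposability above the line. Then I use convexity of the decomposable cone, together with the explicit decompositions already found in Case~1, to prove decomposability below it.

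\textbf{Step 1: non-decomposability for $w>w_{\mathrm{dec}}(a)$.} Denote by $C^{(2)}_{a,w}$ the Choi matrix \cref{eq:ch1} with $b=c=(1-a)/2$ and $w=z\ge0$. Because both matrices are sparse, $\Tr[C^{(2)}_{a,w}\rho_1]$ involves only a few terms.
\begin{itemize}
\item The diagonal of $C_\Phi$ is $(a,b,c,c,a,b,b,c,a)$ and the diagonal of $\rho_1$ (without the prefactor) is $(2,\sqrt2,\sqrt2,\sqrt2,1,0,\sqrt2,0,1)$. Together they contribute $4a+2\sqrt2(b+c)$.
\item The only overlapping off-diagonal entries are the $(1,9)$ and $(2,4)$ pairs. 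These contribute $-2\sqrt2(z+w)$.
\end{itemize}
Inserting $b+c=1-a$ and $w=z$ gives
\begin{equation}
\Tr\bigl[C^{(2)}_{a,w}\rho_1\bigr]=\frac{\sqrt2-1}{4}\Bigl(4a+2\sqrt2(1-a)-4\sqrt2\,w\Bigr).
\end{equation}
This is negative exactly when $w>\tfrac{a}{\sqrt2}+\tfrac12-\tfrac a2=w_{\mathrm{dec}}(a)$. Since $\rho_1$ is PPT, the argument around \cref{eq:ppt_test_decomp} then shows that the map is non-decomposable in that region, provided it is positive.

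\textbf{Step 2: decomposability for $0\le w\le w_{\mathrm{dec}}(a)$.} Within Case~2 the Choi matrix depends affinely on $(a,w)$. The region $\{0\le a\le1,\ 0\le w\le w_{\mathrm{dec}}(a)\}$ is the convex quadrilateral with vertices
\[
(0,0),\quad (1,0),\quad (1,1/\sqrt2),\quad (0,1/2).
\]
Decomposable maps form a convex cone, so it suffices to check the four vertices.
\begin{itemize}
\item The vertices $(0,0)$ and $(1,0)$ satisfy \cref{CPcond}, so they are completely positive.
\item The vertex $(1,1/\sqrt2)$ has $a=1$, $b=c=0$, $w=z=1/\sqrt2$. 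This is exactly $\tfrac1{\sqrt2}C^{(1)}_{\sqrt2,0}$. By the explicit splitting $C_1+C_2^\tau$ of \cref{eq:c1c2_1} at $a=\sqrt2$, it is decomposable.
\item The vertex $(0,1/2)$ has $a=0$, $b=c=1/2$, $w=z=1/2$. This equals $\tfrac12 C^{(1)}_{0,1}$, which is decomposable by \cref{eq:c1c2tilde_1} with $b=1$.
\end{itemize}
Positive rescaling preserves decomposability, so the whole quadrilateral consists of decomposable maps.

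\textbf{Step 3: combining.} Decomposable maps are automatically positive, so Step~2 also places the quadrilateral inside the positivity region of \cref{prop:pos_case2}. One can confirm directly that $w_{\mathrm{dec}}(a)\le w_{\max}(a)$, since equality holds only at $a=0$ where both equal $1/2$. Together with Step~1, this shows that inside the positivity region the map is decomposable if and only if $w\le w_{\mathrm{dec}}(a)$.

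The only real obstacle is recognising the vertices of the quadrilateral as rescaled Case~1 points, so that the decompositions already constructed can be reused. The rest is bookkeeping of the sparse trace.
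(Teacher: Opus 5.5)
Your proposal is correct and uses the same strategy as the paper. The PPT state $\rho_1$ gives $\Tr[C^{(2)}_{a,w}\rho_1]=(\sqrt2-2)\bigl(w-w_{\mathrm{dec}}(a)\bigr)$, exactly as in the paper. Decomposability below the line then follows from convexity plus explicit $P+Q^\tau$ splittings.

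The difference lies in the second half. The paper reduces to the two edge families $C^{(2)}_{0,w}$ ($w\le 1/2$) and $C^{(2)}_{1,w}$ ($w\le 1/\sqrt2$) and writes new splittings for each. You instead reduce to the four vertices and recycle the Case~1 splittings via $C^{(2)}_{1,1/\sqrt2}=\tfrac{1}{\sqrt2}C^{(1)}_{\sqrt2,0}$ and $C^{(2)}_{0,1/2}=\tfrac12 C^{(1)}_{0,1}$.

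Your version is more economical, and also more reliable. The paper's printed $\tilde C_1$ for the $a=1$ edge carries the block $\begin{pmatrix}1/2&w\\ w&1/2\end{pmatrix}$ on $\{\ket{11},\ket{33}\}$. That block is positive only for $w\le 1/2$, not up to $w=1/\sqrt2$ as claimed. The diagonal weight $1/2$ at $\ket{33}$ that the paper places in $\tilde C_2$ must be moved into $\tilde C_1$. After that correction, the paper's splitting at $w=1/\sqrt2$ coincides with your rescaled Case~1 splitting, which is correct as you use it.

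Two minor points:
\begin{enumerate}
\item In Step~3, $w_{\mathrm{dec}}$ and $w_{\max}$ also coincide at $a=1$, where both equal $1/\sqrt2$. On the branch $a\ge 1/3$ the comparison reduces to $a^2\le a$. This is harmless, since you had already deduced positivity from decomposability.
\item The qualifier ``provided it is positive'' in Step~1 is not needed for non-decomposability. The inequality $\Tr[C\rho_1]<0$ with $\rho_1$ PPT already excludes $C=P+Q^\tau$; positivity of the map is only needed to conclude that $\rho_1$ is entangled.
\end{enumerate}
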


\begin{proof}
Let us denote by $C^{(2)}_{a,w}$ the corresponding Choi
matrix. Evaluating the map with $b=c=(1-a)/2$ and $w=z\ge 0$ on the
bound-entangled state \eqref{eq:ppt} gives
\begin{equation}
\Tr\left[ C^{(2)}_{a,w}\,\rho _1 \right]
=(\sqrt{2}-2)\left( w-\frac{1}{2}-\frac{\sqrt{2}-1}{2}\,a \right)\,.
\end{equation}
Therefore, the map is non-decomposable whenever
\[
w>\frac{\sqrt{2}-1}{2}a+\frac12 \ .
\]

This criterion is in fact sharp, and the proof again relies on convexity. Assume that
\[
w'\leq \frac{\sqrt{2}-1}{2}a' + \frac12 \ .
\]
Then any $C^{(2)}_{a',w'}$ can be written as a convex combination of matrices of the form $C^{(2)}_{0,w}$ with $w\leq 1/2$ and $C^{(2)}_{1,w}$ with $w \leq \sqrt{2}/2$. It is therefore enough to establish decomposability for these two extremal families.

Indeed, $ C^{(2)}_{0,w}=C_1 + C_2 ^\tau$, where $C_1$ and $C_2$ are
positive matrices for $w\leq 1/2$, given by
\begin{equation}
C_1 =
\left(
\begin{array}{ccc|ccc|ccc}
\cdot & \cdot & \cdot & \cdot & \cdot & \cdot & \cdot & \cdot & \cdot \\
\cdot & 1/2 & \cdot & w & \cdot & \cdot & \cdot & \cdot & \cdot \\
\cdot & \cdot & \cdot & \cdot & \cdot & \cdot & \cdot & \cdot & \cdot \\ \hline
\cdot & w & \cdot & 1/2 & \cdot & \cdot & \cdot & \cdot & \cdot \\
\cdot & \cdot & \cdot & \cdot & \cdot & \cdot & \cdot & \cdot & \cdot \\
\cdot & \cdot & \cdot & \cdot & \cdot & 1/2 & \cdot & \cdot & \cdot \\ \hline
\cdot & \cdot & \cdot & \cdot & \cdot & \cdot & \cdot & \cdot & \cdot \\
\cdot & \cdot & \cdot & \cdot & \cdot & \cdot & \cdot & 1/2 & \cdot \\
\cdot & \cdot & \cdot & \cdot & \cdot & \cdot & \cdot & \cdot & \cdot
\end{array}
\right)\,,
\quad
C_2 =
\left(
\begin{array}{ccc|ccc|ccc}
\cdot & \cdot & \cdot & \cdot & \cdot & \cdot & \cdot & \cdot & \cdot \\
\cdot & \cdot & \cdot & \cdot & \cdot & \cdot & \cdot & \cdot & \cdot \\
\cdot & \cdot & 1/2 & \cdot & \cdot & \cdot & w & \cdot & \cdot \\ \hline
\cdot & \cdot & \cdot & \cdot & \cdot & \cdot & \cdot & \cdot & \cdot \\
\cdot & \cdot & \cdot & \cdot & \cdot & \cdot & \cdot & \cdot & \cdot \\
\cdot & \cdot & \cdot & \cdot & \cdot & \cdot & \cdot & \cdot & \cdot \\ \hline
\cdot & \cdot & w & \cdot & \cdot & \cdot & 1/2 & \cdot & \cdot \\
\cdot & \cdot & \cdot & \cdot & \cdot & \cdot & \cdot & \cdot & \cdot \\
\cdot & \cdot & \cdot & \cdot & \cdot & \cdot & \cdot & \cdot & \cdot
\end{array}
\right)\,.
\end{equation}

Similarly, $C^{(2)}_{1,w}=\tilde C_1 + \tilde C_2 ^\tau$, where
$\tilde C_1$ and $\tilde C_2$ are positive matrices for $w\leq
\sqrt{2}/2$, given by
\begin{equation}
\tilde C_1 =
\left(
\begin{array}{ccc|ccc|ccc}
1/2 & \cdot & \cdot & \cdot & \cdot & \cdot & \cdot & \cdot & w \\
\cdot & \cdot & \cdot & \cdot & \cdot & \cdot & \cdot & \cdot & \cdot \\
\cdot & \cdot & \cdot & \cdot & \cdot & \cdot & \cdot & \cdot & \cdot \\ \hline
\cdot & \cdot & \cdot & \cdot & \cdot & \cdot & \cdot & \cdot & \cdot \\
\cdot & \cdot & \cdot & \cdot & \cdot & \cdot & \cdot & \cdot & \cdot \\
\cdot & \cdot & \cdot & \cdot & \cdot & \cdot & \cdot & \cdot & \cdot \\ \hline
\cdot & \cdot & \cdot & \cdot & \cdot & \cdot & \cdot & \cdot & \cdot \\
\cdot & \cdot & \cdot & \cdot & \cdot & \cdot & \cdot & \cdot & \cdot \\
w & \cdot & \cdot & \cdot & \cdot & \cdot & \cdot & \cdot & 1/2
\end{array}
\right)\,,
\quad
\tilde C_2 =
\left(
\begin{array}{ccc|ccc|ccc}
1/2 & \cdot & \cdot & \cdot & w & \cdot & \cdot & \cdot & \cdot \\
\cdot & \cdot & \cdot & \cdot & \cdot & \cdot & \cdot & \cdot & \cdot \\
\cdot & \cdot & \cdot & \cdot & \cdot & \cdot & \cdot & \cdot & \cdot \\ \hline
\cdot & \cdot & \cdot & \cdot & \cdot & \cdot & \cdot & \cdot & \cdot \\
w & \cdot & \cdot & \cdot & 1 & \cdot & \cdot & \cdot & \cdot \\
\cdot & \cdot & \cdot & \cdot & \cdot & \cdot & \cdot & \cdot & \cdot \\ \hline
\cdot & \cdot & \cdot & \cdot & \cdot & \cdot & \cdot & \cdot & \cdot \\
\cdot & \cdot & \cdot & \cdot & \cdot & \cdot & \cdot & \cdot & \cdot \\
\cdot & \cdot & \cdot & \cdot & \cdot & \cdot & \cdot & \cdot & 1/2
\end{array}
\right)\,.
\end{equation}

It follows that the map is decomposable in the whole region
\[
w\leq \frac{\sqrt{2}-1}{2}a + \frac12 \ ,
\]
and non-decomposable above it. Therefore, within the positivity region, the map is decomposable if and only if $w\leq (\sqrt{2}-1)a/2 + 1/2$.
\end{proof}

\subparagraph{Eigenvalue bound.}
We now interpret the trace-preserving family of Case~2 in the light of
Theorem~1 of Ref.~\cite{boundoneigs2025}, which states that every
$2$-positive trace-preserving map $\Phi:\mathcal{B}(\mathbb{C}^d)\to\mathcal{B}(\mathbb{C}^d)$ satisfies
\begin{equation}
\Tr \Phi \le d \,\min \Re[\sigma(\Phi)] + d^2 - d\,.
\label{eq:boundary}
\end{equation}
Since in the present case $a+b+c=1$, the maps are trace-preserving,
and the bound applies whenever they are $2$-positive. In our recent
work~\cite{morgillo2026maps} we observed numerically that, unlike the
previously known $d=3$ examples discussed in~\cite{boundoneigs2025},
this family contains positive maps that
violate~\eqref{eq:boundary}. Such violations can occur both in the
decomposable and in the non-decomposable region.

For the present family one has
\begin{equation}
\sigma(\Phi)=\left \{ 1, \frac{3a-1}{2}, \frac{3a-1}{2}, w, w, w, -w, 0, 0 \right \}\,,
\label{eq:spectrum}
\end{equation}
so that
\begin{equation}
\Tr\Phi = 3a+2w\,,
\qquad
\min\Re[\sigma(\Phi)] = \min\left\{\frac{3a-1}{2},-w\right\}\,.
\end{equation}
Substituting these expressions into~\eqref{eq:boundary} gives
\begin{equation}
3a + 2w - 6 - 3 \min\left( \frac{3a - 1}{2},\, -w \right) \leq 0
\,.
\label{eq:bb}
\end{equation}
Therefore, \eqref{eq:bb} is a necessary condition for $2$-positivity in the parameter space.

The red triangle in \cref{fig:boundary2} marks a region containing maps that violate the spectral bound~\eqref{eq:boundary}. Hence any map in the red triangle is necessarily positive but not $2$-positive. In this sense, the violation provides a purely spectral witness of the gap between $1$-positivity and $2$-positivity. In the present family this statement is even sharper, because, as noted in \cref{rem1}, $2$-positivity already implies complete positivity. Thus the red triangle isolates a sector of the positive region that is automatically excluded from the completely positive one, while still intersecting both the decomposable and the genuinely non-decomposable parts of the $1$-positive parameter space.

\subsection{Case 3: $0\leq a=c\leq 1/2$, $b=1-2a$, and $w=z\geq 0$}

\begin{proposition}
\label{prop:dec_case3}
In Case~3, the decomposability threshold inside the positivity region is
\begin{equation}
w_{\mathrm{dec}}(a)=\frac{a}{\sqrt{2}}+\sqrt{a-2a^2}\,.
\label{eq:prop_dec_case3}
\end{equation}
Equivalently, the map is decomposable if and only if $w\le w_{\mathrm{dec}}(a)$, and it is non-decomposable whenever $w> w_{\mathrm{dec}}(a)$.
\end{proposition}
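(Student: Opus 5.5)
The plan has two halves. The first is an explicit decomposition $C_\Phi=P+Q^\tau$ valid for all $w\le w_{\mathrm{dec}}(a)$. The second is a PPT state, depending on $a$, that witnesses non-decomposability for every $w>w_{\mathrm{dec}}(a)$. This mirrors the proofs of \cref{prop:dec_case1,prop:dec_case2}. A single fixed witness such as $\rho_1$ cannot work here, because the threshold \eqref{eq:prop_dec_case3} is not linear in $a$. So the witness must be adapted to $a$, and I would obtain it by duality from the decomposition.

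\textbf{Decomposable side.} With $a=c$, $b=1-2a$ and $w=z$, the diagonal of $C_\Phi$ in \cref{eq:ch1} reads $(a,\,1-2a,\,a\,|\,a,\,a,\,1-2a\,|\,1-2a,\,a,\,a)$. The only off-diagonal entries are $z$ at $(1,9)$ and $w$ at $(2,4)$. Partial transposition on the second factor sends the coherence $\ket{11}\bra{22}$ (positions $(1,5)$) to $\ket{12}\bra{21}$ (positions $(2,4)$), and $\ket{11}\bra{33}$ (positions $(1,9)$) to $\ket{13}\bra{31}$ (positions $(3,7)$). I would split $w=w_1+w_2$ and $z=z_1+z_2$ as follows.
\begin{itemize}
\item $w_1$ sits directly in $P$ on the pair $(2,4)$, using diagonal entries $1-2a$ and $a$. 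This needs $w_1^2\le a(1-2a)$.
\item $z_1$ sits in $P$ on the pair $(1,9)$, using a share $\alpha$ of the $(1,1)$ entry and the $(9,9)$ entry $a$. This needs $z_1^2\le \alpha a$.
\item $w_2$ sits in $Q$ on the pair $(1,5)$, using the remaining share $a-\alpha$ of $(1,1)$ and the $(5,5)$ entry $a$. This needs $w_2^2\le (a-\alpha)a$.
\item $z_2$ sits in $Q$ on the pair $(3,7)$, using diagonal entries $a$ and $1-2a$. This needs $z_2^2\le a(1-2a)$.
\end{itemize}
Partial transposition leaves the diagonal untouched, so the leftover diagonal entries can be put into $P$. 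Taking $\alpha=a/2$ gives $z_1=w_2=a/\sqrt2$ and $w_1=z_2=\sqrt{a-2a^2}$. All four $2\times2$ blocks are then positive exactly when $w=z\le a/\sqrt2+\sqrt{a-2a^2}$. This gives decomposability for every $w\le w_{\mathrm{dec}}(a)$, and no convexity argument is needed. I would write $P$ and $Q$ out explicitly in the same style as \cref{eq:c1c2_1}.

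\textbf{Non-decomposable side.} I would build a PPT state $\rho_a$ on the same sparsity pattern as $\rho_1$. Its diagonal weights are $p_i$ and it has negative coherences $-r$ at $(1,9)$ and $-s$ at $(2,4)$. Then $\rho_a^\tau$ carries these coherences at $(3,7)$ and $(1,5)$ respectively. Complementary slackness tells us how to choose the parameters. At $w=w_{\mathrm{dec}}(a)$ we need $\Tr[P\rho_a]=\Tr[Q\rho_a^\tau]=0$, so each saturated $2\times2$ block of $\rho_a$ (on $(1,9)$, $(2,4)$) and of $\rho_a^\tau$ (on $(1,5)$, $(3,7)$) should be rank one and orthogonal to the corresponding rank-one block of $P$ or $Q$. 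Concretely, if a $P$-block is $\begin{pmatrix} x & w_i\\ w_i & y\end{pmatrix}$ with $xy=w_i^2$, the matching block of $\rho_a$ should be proportional to $\begin{pmatrix} y & -w_i\\ -w_i & x\end{pmatrix}$. Entries where $P$ or $Q$ carry leftover diagonal mass should get zero weight in $\rho_a$, as with the zeros at positions $6$ and $8$ in $\rho_1$. Because $\rho_a$ and $C_\Phi$ are both block-sparse, $\Tr[C_\Phi\rho_a]=\sum_i p_i C_{ii}-2w(r+s)$ is affine in $w$. It vanishes at $w_{\mathrm{dec}}(a)$ and has negative slope, so it is negative for every $w>w_{\mathrm{dec}}(a)$. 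The argument around \eqref{eq:ppt_test_decomp} then gives non-decomposability. As a consistency check, the construction should reproduce $\rho_1$ at the distinguished value of $a$ mentioned before \cref{eq:ppt}.

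\textbf{Main obstacle.} I expect the hardest step to be making the rank-one conditions from $\rho_a$ and from $\rho_a^\tau$ mutually consistent. The diagonal entry $p_1$ appears both in the $(1,9)$ block of $\rho_a$ and in the $(1,5)$ block of $\rho_a^\tau$. Likewise $p_3$ and $p_7$ enter the $(3,7)$ block of $\rho_a^\tau$, while $p_2$ and $p_4$ enter the $(2,4)$ block of $\rho_a$. I would first solve for $r$, $s$ and the $p_i$ with the split $\alpha=a/2$ imposed, and then normalise the trace. I would also check that the region $w>w_{\mathrm{dec}}(a)$ actually meets the positivity region of \cref{prop:pos_case3}, so that the threshold lies inside it. This holds because $a/\sqrt2<a$ shows $w_{\mathrm{dec}}(a)<a+\sqrt{a-2a^2}$, and near $a=1/3$ one also has $w_{\mathrm{dec}}(a)<w_{\mathrm{int}}(a)$.
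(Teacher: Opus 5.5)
Your proposal is correct and is essentially the paper's proof. The split $\alpha=a/2$ gives the paper's boundary decomposition $C_1+C_2^\tau$, up to where the leftover diagonal mass at positions $6,8$ is placed, and your rescaling of the four off-diagonal pieces replaces the paper's convex combination with $C^{(3)}_{a,0}$. Your complementary-slackness conditions are mutually consistent: they force the two coherences to coincide and reproduce exactly the family $\rho(a)$ of \cref{eq:bound_entan} (in its notation, $q=2v$, $g=\sqrt2\,v$ and $r:s=a:(1-2a)$), which the paper instead obtains from the tangency conditions $\Tr=0$, $\partial_a\Tr=0$.

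The one loose end, shared with the paper, is the endpoint $a=1/2$, where this witness collapses to a diagonal separable state. There you can use $\rho(a')$ with $a'\to1/2$. The zero line of $(a,w)\mapsto\Tr[C^{(3)}_{a,w}\rho(a')]$ is the tangent at $a'$ to the concave curve $w_{\mathrm{dec}}$, and its value at $a=1/2$ tends to $1/(2\sqrt2)=w_{\mathrm{dec}}(1/2)$.
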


\begin{proof}
  Let us denote by $C^{(3)}_{a,w}$ the corresponding Choi matrix. We
  first introduce two PPT states, both of
  which are special cases of the broader family analysed later in
  \cref{sec:bound_ent_states}. The first state is
\begin{equation}
\rho_2=\frac{1}{10}
\left(
\begin{array}{ccc|ccc|ccc}
2 & \cdot & \cdot & \cdot & \cdot & \cdot & \cdot & \cdot & -\sqrt{2} \\
\cdot & 1 & \cdot & -\sqrt{2} & \cdot & \cdot & \cdot & \cdot & \cdot \\
\cdot & \cdot & 2 & \cdot & \cdot & \cdot & \cdot & \cdot & \cdot \\ \hline
\cdot & -\sqrt{2} & \cdot & 2 & \cdot & \cdot & \cdot & \cdot & \cdot \\
\cdot & \cdot & \cdot & \cdot & 1 & \cdot & \cdot & \cdot & \cdot \\
\cdot & \cdot & \cdot & \cdot & \cdot & \cdot & \cdot & \cdot & \cdot \\ \hline
\cdot & \cdot & \cdot & \cdot & \cdot & \cdot & 1 & \cdot & \cdot \\
\cdot & \cdot & \cdot & \cdot & \cdot & \cdot & \cdot & \cdot & \cdot \\
-\sqrt{2} & \cdot & \cdot & \cdot & \cdot & \cdot & \cdot & \cdot & 1
\end{array}
\right)\,.
\label{eq:rho1}
\end{equation}
The eigenvalues are given by 
\[ \sigma (\rho_2)= \sigma (\rho_2^\tau )=
\{3/10,3/10,1/5,1/10,1/10,0,0,0,0\} \ .
\]
Moreover,
\begin{equation}
\Tr[C^{(3)}_{a,w} \rho_2] = \frac 15 (2a+1-2\sqrt{2} w)\,.
\end{equation}
The solution of $\Tr[C^{(3)}_{a,w}\rho_2] = 0$ is
\[
w=\frac{2a+1}{2\sqrt{2}}\,.
\]
This line is tangent to the ellipse in \cref{eq:prop_dec_case3} at the point
\[
a = \frac14 \ ,
\qquad
w = \frac{3\sqrt{2}}{8} \ .
\]

The second state is
\begin{equation}
\rho_3=\frac{1}{10}
\left(
\begin{array}{ccc|ccc|ccc}
2 & \cdot & \cdot & \cdot & \cdot & \cdot & \cdot & \cdot & -\sqrt{2} \\
\cdot & 2 & \cdot & -\sqrt{2} & \cdot & \cdot & \cdot & \cdot & \cdot \\
\cdot & \cdot & 1 & \cdot & \cdot & \cdot & \cdot & \cdot & \cdot \\ \hline
\cdot & -\sqrt{2} & \cdot & 1 & \cdot & \cdot & \cdot & \cdot & \cdot \\
\cdot & \cdot & \cdot & \cdot & 1 & \cdot & \cdot & \cdot & \cdot \\
\cdot & \cdot & \cdot & \cdot & \cdot & \cdot & \cdot & \cdot & \cdot \\ \hline
\cdot & \cdot & \cdot & \cdot & \cdot & \cdot & 2 & \cdot & \cdot \\
\cdot & \cdot & \cdot & \cdot & \cdot & \cdot & \cdot & \cdot & \cdot \\
-\sqrt{2} & \cdot & \cdot & \cdot & \cdot & \cdot & \cdot & \cdot & 1
\end{array}
\right)\,.
\label{eq:rho2}
\end{equation}
For this state,
\begin{equation}
\Tr[C^{(3)}_{a,w}\rho_3] = \frac 15 (2-a-2\sqrt{2} w )\,.
\end{equation}
The solution of $\Tr[C^{(3)}_{a,w}\rho_3]=0$ is
\[
w=\frac{2-a}{2\sqrt{2}}\,.
\]
This line is tangent to the ellipse in \cref{eq:prop_dec_case3} at the point
\[
a = \frac25 \ ,
\qquad
w = \frac{2\sqrt{2}}{5} \ .
\]

More generally, for each point on the ellipse, one can construct a rank-four bound-entangled state of the form
\begin{equation}
\rho (a)=
\left(
\begin{array}{ccc|ccc|ccc}
q & \cdot & \cdot & \cdot & \cdot & \cdot & \cdot & \cdot & -g \\
\cdot & r & \cdot & -g & \cdot & \cdot & \cdot & \cdot & \cdot \\
\cdot & \cdot & s & \cdot & \cdot & \cdot & \cdot & \cdot & \cdot \\ \hline
\cdot & -g & \cdot & s & \cdot & \cdot & \cdot & \cdot & \cdot \\
\cdot & \cdot & \cdot & \cdot & v & \cdot & \cdot & \cdot & \cdot \\
\cdot & \cdot & \cdot & \cdot & \cdot & \cdot & \cdot & \cdot & \cdot \\ \hline
\cdot & \cdot & \cdot & \cdot & \cdot & \cdot & r & \cdot & \cdot \\
\cdot & \cdot & \cdot & \cdot & \cdot & \cdot & \cdot & \cdot & \cdot \\
-g & \cdot & \cdot & \cdot & \cdot & \cdot & \cdot & \cdot & v
\end{array}
\right)\,,
\label{eq:bound_entan}
\end{equation}
such that
\begin{equation}
\Tr\Big[ C^{(3)}_{a, \frac{a}{\sqrt{2}} + \sqrt{a - 2a^2}} \, \rho (a)\Big] = 0\,,
\qquad
\partial_a \Tr\Big[ C^{(3)}_{a, \frac{a}{\sqrt{2}} + \sqrt{a - 2a^2}} \, \rho (a)\Big] = 0\,,
\end{equation}
and the parameters $q,r,s,v,g$ are positive, satisfying
\[
g = \sqrt{rs} = \sqrt{qv} \ .
\]

Analytical expressions for the matrix elements as functions of $a$ are obtained by solving
\begin{equation}
\begin{cases}
4\!\left(\frac{a}{\sqrt{2}}+\sqrt{a-2a^{2}}\right) g -a (q+2s+2v) - 2(1-2a)r =0\,, \\
4\!\left(\frac{1}{\sqrt{2}}+\frac{1-4a}{2\sqrt{a-2a^{2}}}\right) g -q-2s-2v +4r=0\,, \\
 g =\sqrt{q\,v}\,, \\
 q\,v = r\,s\,, \\
 q+ 2 (r + s + v)=1\,.
\end{cases}
\end{equation}
The solutions are
\begin{equation}
\begin{aligned}
r &= \frac{a(1-a)-\sqrt{2}\,a \sqrt{a-2a^2}}{ 2\,(1 - 4a + 5a^{2})}\,, &
s &= \frac{(1 - \sqrt{2}\,\sqrt{a-2a^2} - a)(1 - 2a)}{ 2\,(1 - 4a + 5a^{2})}\,, &
v &= \frac{\sqrt{2}\,\sqrt{a-2a^2}(1-a) - 2a + 4a^{2}}{ 4\,(1 - 4a + 5a^{2})}\,, \\
q &= 2 v\,, &
g &= \sqrt{2}\,v\,.
\label{eq:sol}
\end{aligned}
\end{equation}
These matrix elements are plotted as functions of the parameter $a$ in \cref{fig:mat_elems}. In particular, one recovers the bound-entangled states in \cref{eq:ppt,eq:rho1,eq:rho2} for $a = 1/3$, $1/4$, and $2/5$, respectively.
\begin{figure}[hbt]
	\centering
	\includegraphics[width=0.55\textwidth]{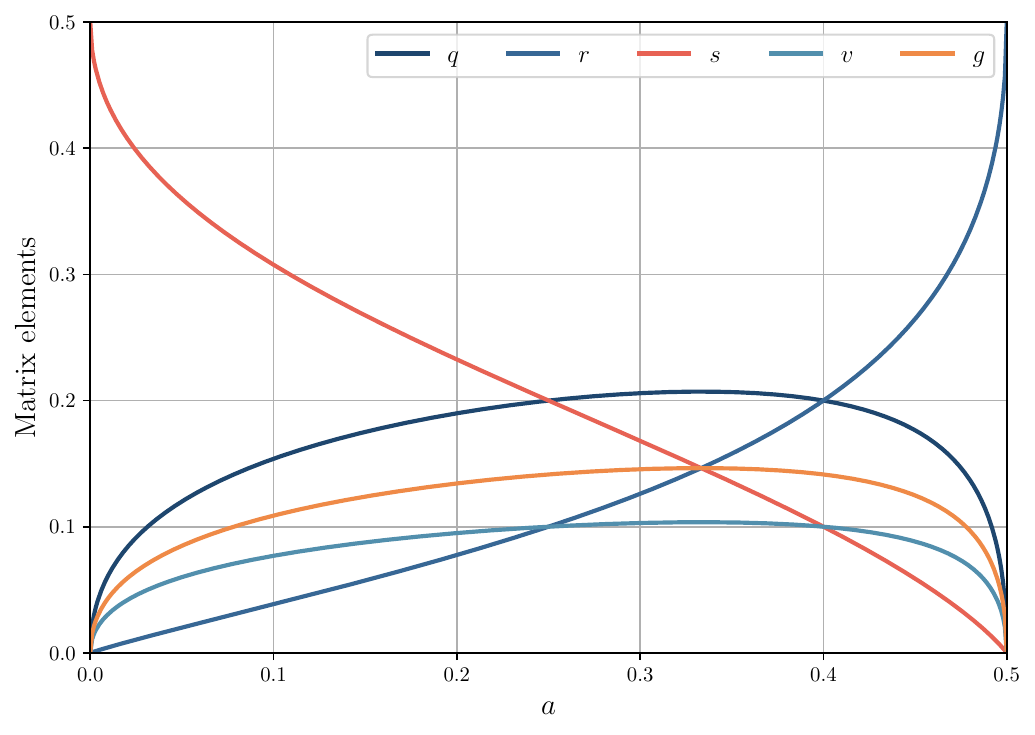}
	\caption[Bound entangled states characterization]{Values of the matrix elements of the rank-four bound-entangled state in \cref{eq:bound_entan} as functions of the parameter $a$.}
	\label{fig:mat_elems}
\end{figure}

Thus we proved that the map is non-decomposable for
\[
w > \frac{a}{\sqrt{2}} + \sqrt{a-2a^2} \ .
\]
We next prove the converse statement, namely that
\cref{eq:prop_dec_case3} is also the decomposability boundary. To this
end, one can write
\begin{equation}
 C^{(3)}_{a,\frac a {\sqrt 2}+\sqrt {a-2a^2}}
 = C_1 + C_2^\tau\,,
\end{equation}
where
\begin{equation}
\scalebox{0.9}{$
C_1 =
\left(
\begin{array}{ccc|ccc|ccc}
\frac{a}{2} & \cdot & \cdot & \cdot & \cdot & \cdot & \cdot & \cdot & \frac{a}{\sqrt{2}} \\
\cdot & 1-2a & \cdot & \sqrt{a-2a^2} & \cdot & \cdot & \cdot & \cdot & \cdot \\
\cdot & \cdot & \cdot & \cdot & \cdot & \cdot & \cdot & \cdot & \cdot \\ \hline
\cdot & \sqrt{a-2a^2} & \cdot & a & \cdot & \cdot & \cdot & \cdot & \cdot \\
\cdot & \cdot & \cdot & \cdot & \cdot & \cdot & \cdot & \cdot & \cdot \\
\cdot & \cdot & \cdot & \cdot & \cdot & \frac{1-2a}{2} & \cdot & \cdot & \cdot \\ \hline
\cdot & \cdot & \cdot & \cdot & \cdot & \cdot & \cdot & \cdot & \cdot \\
\cdot & \cdot & \cdot & \cdot & \cdot & \cdot & \cdot & \frac{a}{2} & \cdot \\
\frac{a}{\sqrt{2}} & \cdot & \cdot & \cdot & \cdot & \cdot & \cdot & \cdot & a
\end{array}
\right)\,,\quad
C_2 =
\left(
\begin{array}{ccc|ccc|ccc}
\frac{a}{2} & \cdot & \cdot & \cdot & \frac{a}{\sqrt{2}} & \cdot & \cdot & \cdot & \cdot \\
\cdot & \cdot & \cdot & \cdot & \cdot & \cdot & \cdot & \cdot & \cdot \\
\cdot & \cdot & a & \cdot & \cdot & \cdot & \sqrt{a-2a^2} & \cdot & \cdot \\ \hline
\cdot & \cdot & \cdot & \cdot & \cdot & \cdot & \cdot & \cdot & \cdot \\
\frac{a}{\sqrt{2}} & \cdot & \cdot & \cdot & a & \cdot & \cdot & \cdot & \cdot \\
\cdot & \cdot & \cdot & \cdot & \cdot & \frac{1-2a}{2} & \cdot & \cdot & \cdot \\ \hline
\cdot & \cdot & \sqrt{a-2a^2} & \cdot & \cdot & \cdot & 1-2a & \cdot & \cdot \\
\cdot & \cdot & \cdot & \cdot & \cdot & \cdot & \cdot & \frac{a}{2} & \cdot \\
\cdot & \cdot & \cdot & \cdot & \cdot & \cdot & \cdot & \cdot & \cdot
\end{array}
\right) $}\,,
\end{equation}
both of which are positive.

Therefore, the map is decomposable within the positivity region if and only if
\begin{equation}
  w\le \frac{a}{\sqrt{2}} + \sqrt{a-2a^2} \ .
\label{eq:ww1}
\end{equation}
Indeed, any map satisfying \cref{eq:ww1} is decomposable, 
because it can be expressed as a convex combination of the decomposable boundary map and the completely positive map $C^{(3)}_{a,0}$.
\end{proof}
\begin{remark}
Note that \cref{eq:bb} still holds, but in the present case no violation is found.
\end{remark}

\section{Associated PPT and bound-entangled state families}
\label{sec:bound_ent_states}

We now turn from the map side to the associated bipartite states. The
aim of this section is twofold. First, we introduce witness-adapted
deformations of the PPT states used in \cref{sec:decomposability};
these families make the transition from separability to PPT
entanglement explicit, but their PPT-entangled branch is not always
detected optimally by the simplest fixed witnesses. Second, we
introduce alternative one-parameter families, whose full PPT-entangled
branch is detected by fixed positive maps. The first class is
therefore naturally tied to the decomposability analysis, whereas the
second sharpens the separability threshold.

\subsection{Original witness-adapted families}

All one-parameter deformations introduced below remain PPT up to
$k=\sqrt2$; the nontrivial issue is where separability ends and PPT
entanglement begins, and how efficiently that transition is detected by the
witnesses arising from \cref{sec:decomposability}.

\subparagraph{First family.}
\begin{proposition}
\label{prop:rho1k}
The family $\rho_1(k)$ defined in \cref{eq:rho1k} has the following
properties:
\begin{enumerate}
\item $\rho_1(k)$ is PPT for all $0\le k\le \sqrt2$;
\item the fixed witness $C^{(1)}_{1/2,\,1/2}$ detects entanglement whenever
\[
\frac{1+\sqrt2}{2}<k\le \sqrt2\,;
\]
\item the state $\rho_1(1)$ is separable and admits the explicit product
decomposition given in \cref{eq:rho1-sep};
\item according to the level-$2$ DPS hierarchy with PPT constraints,
$\rho_1(k)$ is entangled for every $k>1$.
\end{enumerate}
\end{proposition}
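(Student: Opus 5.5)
The plan is to work directly with the explicit entries of $\rho_1(k)$ in \cref{eq:rho1k}. I take it to be the deformation of $\rho_1$ in \cref{eq:ppt} in which the diagonal is kept as $(2,\sqrt2,\sqrt2,\sqrt2,1,0,\sqrt2,0,1)$ and the two coherences $-\sqrt2$ are replaced by $-k$, up to a normalization $N(k)>0$; at $k=\sqrt2$ this recovers $\rho_1$. Everything then exploits the same sparsity that made $C_\Phi$ tractable.

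\textbf{Item 1 (PPT).} The matrix $\rho_1(k)$ splits into two $2\times2$ blocks plus diagonal entries:
\begin{itemize}
\item the block on $\{\ket{11},\ket{33}\}$ is $\begin{pmatrix}2&-k\\-k&1\end{pmatrix}$;
\item the block on $\{\ket{12},\ket{21}\}$ is $\begin{pmatrix}\sqrt2&-k\\-k&\sqrt2\end{pmatrix}$.
\end{itemize}
Both are positive semidefinite if and only if $k\le\sqrt2$. Under partial transposition on the second factor, $\ket{11}\bra{33}\mapsto\ket{13}\bra{31}$ and $\ket{12}\bra{21}\mapsto\ket{11}\bra{22}$. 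The transposed blocks therefore become $\begin{pmatrix}\sqrt2&-k\\-k&\sqrt2\end{pmatrix}$ on $\{\ket{13},\ket{31}\}$ and $\begin{pmatrix}2&-k\\-k&1\end{pmatrix}$ on $\{\ket{11},\ket{22}\}$. Each has determinant $2-k^2\ge0$, so $\rho_1(k)^\tau\succeq0$ for $0\le k\le\sqrt2$. This step is routine.

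\textbf{Item 2 (witness).} By \cref{prop:pos_case1}, $b_{\min}(1/2)=1/2$, so $C^{(1)}_{1/2,1/2}$ is block-positive. It is therefore a legitimate witness: $\Tr[C\rho]<0$ implies entanglement. Pairing the diagonal of \cref{eq:ch1}, which equals $1/2$ everywhere, with the diagonal of $\rho_1(k)$ gives $2+2\sqrt2$. The four off-diagonal pairings with $w=z=1$ contribute $-4k$. Hence
\[
\Tr\bigl[C^{(1)}_{1/2,1/2}\rho_1(k)\bigr]=N(k)\,\bigl(2+2\sqrt2-4k\bigr),
\]
which is negative exactly when $k>(1+\sqrt2)/2$.

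\textbf{Item 3 (separability at $k=1$).} I will produce the decomposition \cref{eq:rho1-sep} by twirling over phases.
\begin{itemize}
\item Take product vectors of the form $(\ket1+e^{i\theta}\ket3)\otimes(\alpha\ket1-\beta e^{-i\theta}\ket3)$. Averaging over $\theta\in\{0,\pi/2,\pi,3\pi/2\}$ kills all unwanted coherences and leaves only the $\ket{11}\bra{33}$ coherence plus diagonal terms. The requirement is that this coherence equal $-1$ while the diagonal weight used stays within $\rho$'s budget: $2$ on $\ket{11}$, $1$ on $\ket{33}$, $\sqrt2$ on $\ket{13}$ and $\sqrt2$ on $\ket{31}$.
\item Do the same with $(\ket1+e^{i\theta}\ket2)\otimes(\gamma\ket2-\delta e^{-i\theta}\ket1)$ to generate the $\ket{12}\bra{21}$ coherence. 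This uses diagonal weight on $\ket{12}$, $\ket{21}$, $\ket{11}$ and $\ket{22}$.
\item Fill the leftover diagonal with product projectors $\ket{ij}\bra{ij}$.
\end{itemize}
The constraint is a budget: the $\ket{11}$ weight is shared by both coherences, while the coherence products only need $\alpha\beta\cdot(\ldots)=1$. Since $k=1<\sqrt2$, there is slack in each AM–GM inequality. I expect a choice such as unit-modulus mixing plus a small diagonal remainder to close.

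\textbf{Item 4 (DPS).} This is the main obstacle, since it is not a closed-form statement. I will formulate the level-2 DPS test with PPT constraints as an SDP: look for a symmetric extension on $A\otimes B\otimes B$ that is PPT across all cuts and reduces to $\rho_1(k)$. I will then show infeasibility for $k>1$.

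The first tool is symmetry reduction. The state is invariant under the diagonal phase group $\ket{i}\otimes\ket{j}\mapsto e^{i(\phi_i-\phi_j')}\ket{i}\otimes\ket{j}$ with the phases locked as in the support of $\rho_1(k)$. This invariance makes the extension block-diagonal in charge sectors and reduces the SDP size.

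For a rigorous version at each $k$, I will extract the dual certificate. This is a witness $W=P+Q^{\tau}$-type operator, certified by the level-2 extension, with $\Tr[W\rho_1(k)]<0$. I will check that it depends on $k$ only through a sign change at $k=1$.

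Realistically, I will present this item as a numerical verification on a fine grid of $k\in(1,\sqrt2]$, supplemented by the exact dual feasibility of the extension at $k=1$, where item 3 gives separability. This leaves open, and flags, the gap between $1$ and $(1+\sqrt2)/2$ that the fixed witness of item 2 does not cover.
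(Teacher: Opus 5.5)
Items 1, 2 and 4 match the paper's proof. You check PPT through the two $2\times2$ blocks, whose determinant is $2-k^2$ both before and after partial transposition. You obtain $(\sqrt2-1)\bigl(\tfrac{1+\sqrt2}{2}-k\bigr)$ by a direct trace. You treat item 4 as a numerical DPS statement, which is all the paper offers there.

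\textbf{Item 3 is not carried out, and your reason for expecting it to close is wrong: there is no slack at $k=1$.}
\begin{itemize}
\item Work with $\tilde\rho=4\rho_1(k)/(\sqrt2-1)$. Let your first twirl deposit weight $A$ on $\ket{11}$ and $\ket{31}$, and $B$ on $\ket{13}$ and $\ket{33}$. Let the second deposit $D$ on $\ket{11}$ and $\ket{21}$, and $C$ on $\ket{12}$ and $\ket{22}$.
\item Matching the coherences forces $AB=CD=k^2$.
\item The diagonal budgets of $\ket{33}$, $\ket{22}$ and the shared $\ket{11}$ give $B\le1$, $C\le1$ and $A+D\le2$.
\item Hence $A,D\ge k^2$ and $2k^2\le A+D\le 2$. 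Your ansatz works only for $k\le1$.
\item At $k=1$ all three constraints are saturated. This forces $A=B=C=D=1$, i.e.\ $\alpha=\beta$, $\gamma=\delta$ and equal weights.
\end{itemize}
This forced choice is exactly the paper's $\ket{\psi_1},\dots,\ket{\psi_8}$ with $p_i=(\sqrt2-1)/4$. The leftover $\sqrt2-1$ (in $\tilde\rho$) on $\ket{12},\ket{21},\ket{13},\ket{31}$ gives $\ket{\psi_9},\dots,\ket{\psi_{12}}$ with $p_i=(3-2\sqrt2)/4$. Your ``unit-modulus'' guess is therefore right. You should present it as the unique admissible choice and verify it entrywise. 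Arguing from slack would wrongly suggest that the construction extends past $k=1$.

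\textbf{On item 4, two remarks.}
\begin{itemize}
\item At $k=1$ what holds is primal feasibility of the DPS program: a PPT symmetric extension exists because the state is separable. This is not dual feasibility.
\item Your idea of a single certificate affine in $k$ and vanishing at $k=1$ can be written down in closed form. This turns ``entangled for every $1<k\le\sqrt2$'' into a proof, although not literally a proof of the DPS-level-$2$ claim.
\end{itemize}
Take the map $\Psi$ with the off-diagonal pattern of \eqref{eq:map} at $w=z=1$, but with diagonal $\Psi[X]_{11}=X_{11}$ and $\Psi[X]_{22}=\Psi[X]_{33}=X_{22}+X_{33}$.
\begin{itemize}
\item \emph{Positivity.} For $X=\ket{y}\bra{y}$ with $s=|y_2|^2+|y_3|^2>0$, the lower block is $s\,\mathbb{1}$. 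The Schur complement is $|y_1|^2-|y_1|^2(|y_2|^2+|y_3|^2)/s=0$, so $\Psi$ is positive. The case $s=0$ is trivial.
\item \emph{Detection.} The Choi matrix of $\Psi$ has diagonal $(1,0,0,0,1,1,0,1,1)$ and unit coherences at $(\ket{11},\ket{33})$ and $(\ket{12},\ket{21})$. A direct trace gives $\Tr[C_\Psi\,\rho_1(k)]=(\sqrt2-1)(1-k)$, which is negative for every $k>1$.
\end{itemize}
This closes the interval $1<k\le\tfrac{1+\sqrt2}{2}$ that you flagged as left to numerics.
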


\begin{proof}
Consider the family
\begin{equation}
\rho_1(k)= \frac{\sqrt{2}-1}{4}
\left(
\begin{array}{ccc|ccc|ccc}
2 & \cdot & \cdot & \cdot & \cdot & \cdot & \cdot & \cdot & -k \\
\cdot & \sqrt{2} & \cdot & -k & \cdot & \cdot & \cdot & \cdot & \cdot \\
\cdot & \cdot & \sqrt{2} & \cdot & \cdot & \cdot & \cdot & \cdot & \cdot \\ \hline
\cdot & -k & \cdot & \sqrt{2} & \cdot & \cdot & \cdot & \cdot & \cdot \\
\cdot & \cdot & \cdot & \cdot & 1 & \cdot & \cdot & \cdot & \cdot \\
\cdot & \cdot & \cdot & \cdot & \cdot & \cdot & \cdot & \cdot & \cdot \\ \hline
\cdot & \cdot & \cdot & \cdot & \cdot & \cdot & \sqrt{2} & \cdot & \cdot \\
\cdot & \cdot & \cdot & \cdot & \cdot & \cdot & \cdot & \cdot & \cdot \\
-k & \cdot & \cdot & \cdot & \cdot & \cdot & \cdot & \cdot & 1
\end{array}
\right),
\label{eq:rho1k}
\end{equation}
which is PPT for $0\le k\le \sqrt2$.

The Case~1 witness family gives
\begin{equation}
\Tr\!\bigl[C^{(1)}_{a,b}\,\rho_1(k)\bigr]
=
(\sqrt{2}-1)\bigl(a+\sqrt2\,b-k\bigr) \ .
\label{eq:rho1k-trace}
\end{equation}
Choosing $a=b=1/2$ yields
\[
\Tr\!\bigl[C^{(1)}_{1/2,\,1/2}\,\rho_1(k)\bigr]=0
\qquad\Longleftrightarrow\qquad
k=\frac{1+\sqrt2}{2} \ ,
\]
so $C^{(1)}_{1/2,\,1/2}$ detects entanglement for
$\frac{1+\sqrt2}{2}<k\le \sqrt2$.

For smaller values of $k$, stronger criteria are needed. The CCNR
criterion~\cite{chen2002matrix,rudolph2005further} proves entanglement when
\begin{equation}
k>\sqrt{2}+\frac34-\frac{\sqrt{17}}{4}\simeq 1.1334 \ ,
\end{equation}
while a level-$2$ DPS test with PPT constraints~\cite{doherty2004complete}
indicates entanglement for every $k>1$.

At $k=1$, the state is separable and admits the decomposition
\begin{equation}
\rho_1(1)=\sum_{i=1}^{12} p_i\,\ket{\psi_i}\!\bra{\psi_i} \ ,
\label{eq:rho1-sep}
\end{equation}
with product vectors
\begin{align}
&\ket{\psi_1}= \tfrac12 (\ket{1}+\ket{3})\otimes(\ket{1}-\ket{3}) \ ,
&&\ket{\psi_2}= \tfrac12 (\ket{1}-\ket{3})\otimes(\ket{1}+\ket{3}) \ ,
&&\ket{\psi_3}= \tfrac12 (\ket{1}+i\ket{3})\otimes(\ket{1}+i\ket{3}) \ ,
\nonumber\\[4pt]
&\ket{\psi_4}= \tfrac12 (\ket{1}-i\ket{3})\otimes(\ket{1}-i\ket{3}) \ ,
&&\ket{\psi_5}= \tfrac12 (\ket{1}+\ket{2})\otimes(\ket{1}-\ket{2}) \ ,
&&\ket{\psi_6}= \tfrac12 (\ket{1}-\ket{2})\otimes(\ket{1}+\ket{2}) \ ,
\nonumber\\[4pt]
&\ket{\psi_7}= \tfrac12 (\ket{1}+i\ket{2})\otimes(\ket{1}-i\ket{2}) \ ,
&&\ket{\psi_8}= \tfrac12 (\ket{1}-i\ket{2})\otimes(\ket{1}+i\ket{2}) \ ,
&&\ket{\psi_9}= \ket{1}\otimes\ket{2} \ ,
\nonumber\\[4pt]
&\ket{\psi_{10}}= \ket{2}\otimes\ket{1} \ ,
&&\ket{\psi_{11}}= \ket{1}\otimes\ket{3} \ ,
&&\ket{\psi_{12}}= \ket{3}\otimes\ket{1} \ ,
\end{align}
and weights
\begin{equation}
p_i=\frac{\sqrt{2}-1}{4}\quad (i\le 8) \ ,
\qquad
p_i=\frac{3-2\sqrt{2}}{4}\quad (i\ge 9) \ .
\end{equation}
This proves the claim.
\end{proof}

\subparagraph{Second family.}
\begin{proposition}
\label{prop:rho23k}
The families $\rho_2(k)$ and $\rho_3(k)$ defined in \cref{eq:rho23k} satisfy:
\begin{enumerate}
\item both are PPT for all $0\le k\le \sqrt2$;
\item the fixed witness $C^{(3)}_{1/3,\,2/3}$ detects both families for
\[
\frac54<k\le \sqrt2\,;
\]
\item the CCNR criterion proves entanglement for both families whenever
\[
k>\frac{\sqrt{10}-S}{4}\simeq 1.128  \ ,
\]
where $S$ is the largest real root of $S^3+2S^2-28S-72=0$;
\item the states $\rho_2(1)$ and $\rho_3(1)$ are separable;
\item according to the level-$2$ DPS hierarchy with PPT constraints, both
families are entangled for every $k>1$.
\end{enumerate}
\end{proposition}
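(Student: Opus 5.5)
The plan is to verify the five items separately. Each one reduces to a small computation because of the sparse, block-diagonal structure that $\rho_2(k)$ and $\rho_3(k)$ share with $\rho_1(k)$. I take $\rho_2(k)$ and $\rho_3(k)$ to be \cref{eq:rho1,eq:rho2} with every off-diagonal entry $-\sqrt2$ replaced by $-k$.

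\emph{PPT (item 1).} The only nontrivial blocks of $\rho_2(k)$ are the $2\times2$ blocks on $\{\ket{11},\ket{33}\}$ and $\{\ket{12},\ket{21}\}$. Up to the prefactor $1/10$ their diagonals are $(2,1)$ and $(1,2)$, and both have off-diagonal entry $-k$. All remaining entries are non-negative diagonal entries. Hence $\rho_2(k)\succeq0$ if and only if $k^2\le2$.

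Partial transposition on the second factor sends $\ket{11}\bra{33}$ to $\ket{13}\bra{31}$ and $\ket{12}\bra{21}$ to $\ket{11}\bra{22}$. The new $2\times2$ blocks therefore sit on $\{\ket{13},\ket{31}\}$, with diagonal $(2,1)$, and on $\{\ket{11},\ket{22}\}$, with diagonal $(2,1)$. Both again give $k^2\le2$. The same bookkeeping applies to $\rho_3(k)$, whose relevant diagonal entries are also $\{1,2\}$ in each pair.

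\emph{Fixed witness (item 2).} Only the diagonal of $C^{(3)}_{a,w}$ and its two coherences $w$ and $z=w$ contribute to the trace against $\rho_{2,3}(k)$. The diagonal part is independent of $k$, and the coherences contribute linearly in $k$. Matching with the values at $k=\sqrt2$ given in the proof of \cref{prop:dec_case3} yields
\[
\Tr[C^{(3)}_{a,w}\rho_2(k)]=\tfrac1{10}(4a+2-4wk),\qquad
\Tr[C^{(3)}_{a,w}\rho_3(k)]=\tfrac1{10}(4-2a-4wk).
\]
At $a=1/3$, $w=2/3$ both expressions equal $\tfrac1{30}(10-8k)$, which is negative exactly for $k>5/4$.

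It remains to check that $C^{(3)}_{1/3,2/3}$ is block-positive. By \cref{prop:pos_case3}, $w_{\max}(1/3)=1/3+\sqrt{1/3-2/9}=2/3$, so the point lies on the positivity boundary and the map is positive. Since the states are PPT, this detection is of PPT entanglement.

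\emph{CCNR (item 3).} I would compute the realigned matrix $R(\rho)$, defined by $\bra{ij}R\ket{kl}=\bra{ik}\rho\ket{jl}$. Because $\rho$ is sparse, $R(\rho)$ splits into a small block $B(k)$ built from the diagonal entries $\rho_{ii,jj}$ together with the entries $-k$ that land there, plus isolated entries whose singular values are trivial. The singular values of $B(k)$ are square roots of the eigenvalues of $B^\dagger B$, a small matrix whose characteristic polynomial should be at most cubic in the relevant invariant.

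Imposing $\|R(\rho)\|_1=1$ and isolating the square-root term should produce the stated threshold $k=(\sqrt{10}-S)/4$, with $S$ the largest root of $S^3+2S^2-28S-72=0$. The same computation should cover $\rho_3(k)$, because it is obtained from $\rho_2(k)$ by a local relabelling and swap that preserve singular values. I expect this to be the main obstacle: the algebra yielding the exact cubic is delicate, and I would first carry it out symbolically before trying to simplify it.

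\emph{Separability and DPS (items 4 and 5).} For $k=1$ I would imitate \cref{eq:rho1-sep}. The coherence between $\ket{11}$ and $\ket{33}$ (respectively between $\ket{12}$ and $\ket{21}$) is generated by equal-weight mixtures of the product vectors $(\ket1\pm\ket3)\otimes(\ket1\mp\ket3)$ and $(\ket1\pm i\ket3)\otimes(\ket1\pm i\ket3)$ (respectively $(\ket1\pm\ket2)\otimes(\ket1\mp\ket2)$ and $(\ket1\pm i\ket2)\otimes(\ket1\mp i\ket2)$). These mixtures produce exactly the entries $-1$ while adding equal diagonal weights to the four corresponding basis products. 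The leftover diagonal is non-negative when $k=1$, since each pair has $\min(\text{diag})=1\ge k$, and it is absorbed by pure basis products.

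Item 5 is a numerical statement. I would settle it by solving the level-$2$ DPS semidefinite feasibility problem with PPT constraints on the symmetric extension, and reporting infeasibility (with a dual certificate) on a grid of $k>1$.
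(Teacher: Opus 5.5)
Items 1, 2 and 4 take the same route as the paper and are correct. You also supply checks the paper leaves implicit: the $2\times2$ blocks of $\rho_j(k)$ and $\rho_j(k)^\tau$, and the fact that $C^{(3)}_{1/3,2/3}$ lies on the positivity boundary $w_{\max}(1/3)=2/3$, so it is a legitimate witness.

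Two small corrections.

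\emph{Item 4.} The justification ``each pair has $\min(\mathrm{diag})=1$'' is not the right check. The two mixtures also put weight $1/10$ on $\ket{13},\ket{31},\ket{22}$, and both of them put weight on $\ket{11}$. What you need is $\bra{11}\rho\ket{11}\ge 2/10$ together with the other six affected diagonal entries being $\ge 1/10$. This does hold. The leftover sits on $\ket{13},\ket{21}$ for $\rho_2(1)$ and on $\ket{12},\ket{31}$ for $\rho_3(1)$, which are exactly the paper's weights.

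\emph{Item 5.} A grid cannot reach ``every $k>1$''. Since $\rho_j(k)$ is affine in $k$ and the DPS-feasible set is convex, one dual certificate at $k_0$ covers all of $[k_0,\sqrt2]$. Numerics therefore certify $k>1+\epsilon$, and that is all the paper's own claim can mean too.

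The genuine gap is item 3. The paper only asserts it, and your sketch would not deliver it as written: both the structure you predict and the target formula are off. With your convention $\bra{ij}R\ket{kl}=\bra{ik}\rho\ket{jl}$, no entry $-k$ lands in the block built from the diagonal. The diagonal of $\rho_2(k)$ fills the $k$-independent block $R_{(ii),(kk)}=D_{ik}/10$, with
\[
D=\begin{pmatrix}2&1&2\\2&1&0\\1&0&1\end{pmatrix}.
\]
The four coherences go to $R_{(13),(13)}$, $R_{(31),(31)}$, $R_{(12),(21)}$ and $R_{(21),(12)}$. These sit in distinct rows and columns, disjoint from that block. Hence
\[
\|R(\rho_2(k))\|_1=\frac{4k+\|D\|_1}{10},
\]
which is affine in $k$, so there is no square root to isolate.

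To compute $e_1=\|D\|_1$: $D^TD$ has trace $16$ and second invariant $28$, and $|\det D|=2$. This gives $(e_1^2-16)^2=4(28+4e_1)$, i.e.\ $(e_1-2)(e_1^3+2e_1^2-28e_1-72)=0$. Since $e_1\ge(\Tr D^TD)^{1/2}=4$, we get $\|D\|_1=S\simeq5.490$. CCNR therefore detects entanglement exactly for $k>(10-S)/4\simeq1.128$.

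The printed threshold $(\sqrt{10}-S)/4\simeq-0.58$ is a typo for $(10-S)/4$, so no correct computation can literally ``produce the stated threshold''. The same mechanism reproduces the paper's value for $\rho_1(k)$: there $\|D_1\|_1=1+\sqrt{17}$, giving $\sqrt2+\tfrac34-\tfrac{\sqrt{17}}{4}$. Your symmetry remark is right: $\rho_3(k)=V\rho_2(k)V$ with $V$ the swap, which replaces $D$ by $D^T$ and $R$ by $R^T$, so the threshold is the same for both families.
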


\begin{proof}
Consider
\begin{equation}
\rho_2(k)=\frac {1}{10}
\left(
\begin{array}{ccc|ccc|ccc}
2 & \cdot & \cdot & \cdot & \cdot & \cdot & \cdot & \cdot & -k \\
\cdot & 1 & \cdot & -k & \cdot & \cdot & \cdot & \cdot & \cdot \\
\cdot & \cdot & 2 & \cdot & \cdot & \cdot & \cdot & \cdot & \cdot \\ \hline
\cdot & -k & \cdot & 2 & \cdot & \cdot & \cdot & \cdot & \cdot \\
\cdot & \cdot & \cdot & \cdot & 1 & \cdot & \cdot & \cdot & \cdot \\
\cdot & \cdot & \cdot & \cdot & \cdot & \cdot & \cdot & \cdot & \cdot \\ \hline
\cdot & \cdot & \cdot & \cdot & \cdot & \cdot & 1 & \cdot & \cdot \\
\cdot & \cdot & \cdot & \cdot & \cdot & \cdot & \cdot & \cdot & \cdot \\
-k & \cdot & \cdot & \cdot & \cdot & \cdot & \cdot & \cdot & 1
\end{array}
\right),
\qquad
\rho_3(k)=\frac{1}{10}
\left(
\begin{array}{ccc|ccc|ccc}
2 & \cdot & \cdot & \cdot & \cdot & \cdot & \cdot & \cdot & -k \\
\cdot & 2 & \cdot & -k & \cdot & \cdot & \cdot & \cdot & \cdot \\
\cdot & \cdot & 1 & \cdot & \cdot & \cdot & \cdot & \cdot & \cdot \\ \hline
\cdot & -k & \cdot & 1 & \cdot & \cdot & \cdot & \cdot & \cdot \\
\cdot & \cdot & \cdot & \cdot & 1 & \cdot & \cdot & \cdot & \cdot \\
\cdot & \cdot & \cdot & \cdot & \cdot & \cdot & \cdot & \cdot & \cdot \\ \hline
\cdot & \cdot & \cdot & \cdot & \cdot & \cdot & 2 & \cdot & \cdot \\
\cdot & \cdot & \cdot & \cdot & \cdot & \cdot & \cdot & \cdot & \cdot \\
-k & \cdot & \cdot & \cdot & \cdot & \cdot & \cdot & \cdot & 1
\end{array}
\right).
\label{eq:rho23k}
\end{equation}
Both states are PPT for $0\le k\le \sqrt2$.

Evaluating the fixed witness $C^{(3)}_{1/3,\,2/3}$ gives
\[
\Tr\!\bigl[C^{(3)}_{1/3,\,2/3}\,\rho_2(k)\bigr]
=
\Tr\!\bigl[C^{(3)}_{1/3,\,2/3}\,\rho_3(k)\bigr]
=
\frac{5-4k}{15},
\]
so both families are certified as PPT entangled for
$\frac54<k\le \sqrt2$.

The CCNR criterion improves the bound and proves entanglement for
\begin{equation}
k>\frac{\sqrt{10}-S}{4}\simeq 1.128,
\end{equation}
where $S$ is the largest real root of $S^3+2S^2-28S-72=0$, while the
level-$2$ DPS hierarchy with PPT constraints indicates entanglement for every
$k>1$.

At $k=1$ both families are separable: one recovers product
decompositions of the form \cref{eq:rho1-sep}. For $\rho_2(1)$ the
nonzero weights are $p_i=1/10$ for all $i\neq9,12$, while for
$\rho_3(1)$ they are $p_i=1/10$ for all $i\neq10,11$.
\end{proof}

For what follows we introduce the two subspaces
\begin{equation}
\mathcal H_{12}:=\operatorname{span}\{\ket{11},\ket{12},\ket{21},\ket{22}\} \ ,
\qquad
\mathcal H_{13}:=\operatorname{span}\{\ket{11},\ket{13},\ket{31},\ket{33}\} \ .
\label{eq:xi-subspaces}
\end{equation}

\subparagraph{Third family.}
\begin{proposition}
\label{prop:xiak}
Let $a\in[0,1/2]$, and let $\xi(a,k)$ be the family defined in
\cref{eq:xiak}, with coefficients $q,r,s,v,g$ given by \cref{eq:sol}.
Then:
\begin{enumerate}
\item $\xi(a,k)$ is separable for every $0\le k\le v$;
\item the threshold $k=v$ arises from an explicit decomposition into two
$2\times2$ PPT blocks;
\item according to the level-$2$ DPS hierarchy with PPT constraints, the
state is bound entangled for $v<k \leq \sqrt{2}\,v$.
\end{enumerate}
\end{proposition}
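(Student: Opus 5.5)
The plan is to exploit the sparsity of $\xi(a,k)$ exactly as was done for the Choi matrices. I take $\xi(a,k)$ to be the state $\rho(a)$ of \cref{eq:bound_entan} with coefficients \cref{eq:sol} and the off-diagonal entry $g$ replaced by $k$. Its nonzero entries are then
\begin{itemize}
\item the diagonal $q$ on $\ket{11}$, $r$ on $\ket{12}$, $s$ on $\ket{13}$, $s$ on $\ket{21}$, $v$ on $\ket{22}$, $r$ on $\ket{31}$, $v$ on $\ket{33}$;
\item the coherences $-k$ between $\ket{11}$ and $\ket{33}$, and between $\ket{12}$ and $\ket{21}$.
\end{itemize}
If the definition in \cref{eq:xiak} differs (for example, only one coherence is deformed), the same argument applies with the obvious changes.

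\emph{PPT range.} I would check PPT directly. Positivity of $\xi$ requires $k^2\le qv=2v^2$ and $k^2\le rs$, and \cref{eq:sol} gives $rs=g^2=2v^2$. Under partial transposition on the second factor, the coherence on $\ket{11}\leftrightarrow\ket{33}$ moves to $\ket{13}\leftrightarrow\ket{31}$, giving the condition $k^2\le sr$. The coherence on $\ket{12}\leftrightarrow\ket{21}$ moves to $\ket{11}\leftrightarrow\ket{22}$, giving $k^2\le qv$. Hence $\xi(a,k)$ is PPT exactly for $0\le k\le\sqrt2\,v$, and at $k=\sqrt2\,v=g$ it reduces to $\rho(a)$.

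\emph{Separability for $k\le v$ (items 1 and 2).} I would split the weight $q$ on $\ket{11}$ as $q=q_1+q_2$ and write $\xi=\xi_{13}+\xi_{12}$.
\begin{itemize}
\item $\xi_{13}$ is supported on $\mathcal H_{13}$ of \cref{eq:xi-subspaces}. It carries the diagonal $(q_1,s,r,v)$ on $(\ket{11},\ket{13},\ket{31},\ket{33})$ and the coherence $-k$ between $\ket{11}$ and $\ket{33}$.
\item $\xi_{12}$ is supported on $\mathcal H_{12}$. It carries $(q_2,r,s,v)$ on $(\ket{11},\ket{12},\ket{21},\ket{22})$ and the coherence $-k$ between $\ket{12}$ and $\ket{21}$.
\end{itemize}
Every diagonal entry other than $q$ is used exactly once, so the split is consistent. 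Each block lives on a product $\mathbb C^2\otimes\mathbb C^2$ subspace, namely $\mathrm{span}\{1,3\}\otimes\mathrm{span}\{1,3\}$ and $\mathrm{span}\{1,2\}\otimes\mathrm{span}\{1,2\}$. By the Peres--Horodecki criterion in $2\times2$, each block is separable iff it is PPT. The conditions are:
\begin{itemize}
\item for $\xi_{13}$: $k^2\le q_1v$ (positivity) and $k^2\le rs$ (PPT);
\item for $\xi_{12}$: $k^2\le rs$ (positivity) and $k^2\le q_2v$ (PPT).
\end{itemize}
The optimal choice is $q_1=q_2=q/2=v$, which gives $k\le v$ since $rs=2v^2\ge v^2$. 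So $\xi(a,k)$ is a sum of two separable operators for $0\le k\le v$. For a fully explicit statement I would also write the product-vector decompositions of the two qubit blocks, in the style of \cref{eq:rho1-sep}.

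The threshold $k=v$ is where this particular two-block decomposition saturates. Item 2 asks only that the threshold arises from this decomposition, so no optimality claim about it is needed.

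\emph{Entanglement for $v<k\le\sqrt2\,v$ (item 3).} This is the main obstacle, because the fixed witnesses $C^{(3)}$ give $\Tr[C\,\xi]\ge0$ up to the tangency point at $k=\sqrt2\,v$. The claim in the statement is itself phrased as a DPS outcome. I would run the level-2 DPS SDP with PPT constraints on a grid of $a\in[0,1/2]$ and $k\in(v,\sqrt2\,v]$, and report infeasibility of a PPT symmetric extension for all $k>v$. To make this rigorous, I would extract from the dual SDP a witness $W$ and verify $\Tr[W\xi]<0$ in exact arithmetic at sample points.

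As an analytic cross-check I would also try the CCNR criterion. By analogy with \cref{prop:rho1k,prop:rho23k}, I expect it to certify entanglement only on a subinterval above $v$, leaving the DPS certificate as the essential ingredient near $k=v$.
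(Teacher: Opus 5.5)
Your proposal is correct and follows the paper's proof. The even split $q_1=q_2=v$ is exactly the paper's decomposition $\xi(a,k)=\tfrac12(\mu+\nu)$ into $2\times2$ blocks on $\mathcal H_{12}$ and $\mathcal H_{13}$, which are PPT and hence separable iff $k\le v$ because $rs=qv=2v^2$. Item 3 rests, as in the paper, on the numerical level-$2$ DPS test; your explicit PPT check up to $k=\sqrt2\,v$ and the dual-certificate extraction are useful additions.

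One correction to your motivation for item 3: only the decomposable tangent maps $C^{(3)}_{a,w_{\mathrm{dec}}(a)}$ fail to detect any part of the branch. Genuinely non-decomposable Case-3 maps already detect part of it below $k=\sqrt2\,v$; for example $C^{(3)}_{1/3,2/3}$ detects $\xi(a,k)$ for $k>5v/4$ at $a=1/4$ and $a=2/5$ (cf.\ \cref{prop:rho23k}). DPS is therefore needed only between $k=v$ and the witness or CCNR threshold.
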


\begin{proof}
Consider the family
\begin{equation}
\xi(a,k)=
\left(
\begin{array}{ccc|ccc|ccc}
q & \cdot & \cdot & \cdot & \cdot & \cdot & \cdot & \cdot & -k \\
\cdot & r & \cdot & -k & \cdot & \cdot & \cdot & \cdot & \cdot \\
\cdot & \cdot & s & \cdot & \cdot & \cdot & \cdot & \cdot & \cdot \\ \hline
\cdot & -k & \cdot & s & \cdot & \cdot & \cdot & \cdot & \cdot \\
\cdot & \cdot & \cdot & \cdot & v & \cdot & \cdot & \cdot & \cdot \\
\cdot & \cdot & \cdot & \cdot & \cdot & \cdot & \cdot & \cdot & \cdot \\ \hline
\cdot & \cdot & \cdot & \cdot & \cdot & \cdot & r & \cdot & \cdot \\
\cdot & \cdot & \cdot & \cdot & \cdot & \cdot & \cdot & \cdot & \cdot \\
-k & \cdot & \cdot & \cdot & \cdot & \cdot & \cdot & \cdot & v
\end{array}
\right),
\label{eq:xiak}
\end{equation}
with $0\le k\le g$. Here $q,r,s,v,g$ depend on a single parameter
$a\in[0,1/2]$ through the formulas in \cref{eq:sol}. This family
interpolates among the witness states used in
\cref{sec:decomposability}.

Then
\begin{equation}
\xi(a,k)=\frac12\bigl(\mu (a,k)+\nu(a,k)\bigr) \ ,
\end{equation}
where $\mu (a,k)$ is supported on $\mathcal H_{12}$ and
$\nu(a,k)$ on $\mathcal H_{13}$. In the ordered bases of
\cref{eq:xi-subspaces},
\begin{equation}
\mu(a,k)=2
\begin{pmatrix}
v & 0 & 0 & 0 \\
0 & r & -k & 0 \\
0 & -k & s & 0 \\
0 & 0 & 0 & v
\end{pmatrix},
\qquad
\nu(a,k)=2
\begin{pmatrix}
v & 0 & 0 & -k \\
0 & s & 0 & 0 \\
0 & 0 & r & 0 \\
-k & 0 & 0 & v
\end{pmatrix}.
\label{eq:xi-blocks}
\end{equation}
Since $rs=qv=2v^2$, both blocks are PPT $2\times2$ states whenever
$k\le v$, and are therefore separable by the PPT criterion in
dimension $2\times2$.  Consequently, $\xi(a,k)$ is separable for $0\le
k\le v$.  For $\sqrt{2}\, v \geq k>v$, the level-$2$ DPS hierarchy with PPT constraints
indicates that the state is bound entangled.
\end{proof}

\begin{remark}
The witness states $\rho_1$, $\rho_2$, and $\rho_3$ used in
\cref{sec:decomposability} are recovered from the family
$\rho(a)$ in \cref{eq:bound_entan} at $a=1/3$, $a=1/4$, and $a=2/5$,
respectively. Likewise, $\rho_1 (k)$, $\rho_2(k)$ and $\rho_3(k)$ arise from
$\xi(a,k)$ at those same special values of $a$, up to the normalization of
the deformation parameter.
\end{remark}

\subsection{Families of PPT states efficiently detected by fixed witnesses}

The families $\rho_1(k)$, $\rho_2(k)$, and $\rho_3(k)$ provide natural
witness-adapted deformations, but their PPT-entangled branches are not always
optimally detected by the simplest fixed witnesses. We therefore introduce
alternative deformations whose full PPT-entangled branch is detected by fixed
positive maps.

\begin{proposition}
\label{prop:sigma1}
For $0\le k\le \sqrt2$, define
\[
\sigma_1(k)=
\frac{1}{4(1+k)}
\left(
\begin{array}{ccc|ccc|ccc}
2 & \cdot & \cdot & \cdot & \cdot & \cdot & \cdot & \cdot & -k\\
\cdot & k & \cdot & -k & \cdot & \cdot & \cdot & \cdot & \cdot\\
\cdot & \cdot & k & \cdot & \cdot & \cdot & \cdot & \cdot & \cdot\\ \hline
\cdot & -k & \cdot & k & \cdot & \cdot & \cdot & \cdot & \cdot\\
\cdot & \cdot & \cdot & \cdot & 1 & \cdot & \cdot & \cdot & \cdot\\
\cdot & \cdot & \cdot & \cdot & \cdot & \cdot & \cdot & \cdot & \cdot\\ \hline
\cdot & \cdot & \cdot & \cdot & \cdot & \cdot & k & \cdot & \cdot\\
\cdot & \cdot & \cdot & \cdot & \cdot & \cdot & \cdot & \cdot & \cdot\\
-k & \cdot & \cdot & \cdot & \cdot & \cdot & \cdot & \cdot & 1
\end{array}
\right) \ .
\]
Then $\sigma_1(\sqrt2)=\rho_1$, the state $\sigma_1(k)$ is separable for
$0\le k\le1$ and PPT for $0\le k\le\sqrt2$, and
\[
\Tr\!\bigl[C^{(1)}_{1/2,\,1/2}\,\sigma_1(k)\bigr]
=
\frac{1-k}{2(1+k)} \ .
\]
Hence $\sigma_1(k)$ is PPT entangled and detected by the fixed witness
$C^{(1)}_{1/2,\,1/2}$ for every $1<k\le\sqrt2$.
\end{proposition}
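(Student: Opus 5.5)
The plan is to verify the four claims in turn, all by direct inspection of the sparse structure of $\sigma_1(k)$. The fixed witness is certified positive by \cref{prop:pos_case1}.

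\emph{Identification and normalization.} For the first claim I note that at $k=\sqrt2$ the prefactor becomes $1/(4(1+\sqrt2))=(\sqrt2-1)/4$. The matrix entries then coincide with those of $\rho_1$ in \cref{eq:ppt}, so $\sigma_1(\sqrt2)=\rho_1$. The diagonal sums to $4+4k$, so $\Tr\sigma_1(k)=1$ for every $k$.

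\emph{PPT.} I decompose $\sigma_1(k)$ into its invariant blocks. These are the $\{\ket{11},\ket{33}\}$ block $\bigl(\begin{smallmatrix}2&-k\\-k&1\end{smallmatrix}\bigr)$, the $\{\ket{12},\ket{21}\}$ block $k\bigl(\begin{smallmatrix}1&-1\\-1&1\end{smallmatrix}\bigr)$, and the non-negative diagonal entries on $\ket{13},\ket{22},\ket{31}$. Positivity therefore reduces to $2-k^2\ge0$, i.e.\ $k\le\sqrt2$.

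Under partial transposition on the second factor, the coherence $\ket{11}\bra{33}$ moves to $\ket{13}\bra{31}$ and $\ket{12}\bra{21}$ moves to $\ket{11}\bra{22}$. The transposed state thus has the block $\bigl(\begin{smallmatrix}2&-k\\-k&1\end{smallmatrix}\bigr)$ on $\{\ket{11},\ket{22}\}$ and the block $k\bigl(\begin{smallmatrix}1&-1\\-1&1\end{smallmatrix}\bigr)$ on $\{\ket{13},\ket{31}\}$, plus diagonal entries. Hence it is PSD under the same condition $k\le\sqrt2$.

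\emph{Separability for $k\le1$.} I mimic the proof of \cref{prop:xiak} and split $\sigma_1(k)$ into a part supported on $\mathcal H_{12}$ and a part supported on $\mathcal H_{13}$ (see \cref{eq:xi-subspaces}). The weight $2$ on $\ket{11}$ is divided equally, giving $1$ to each part. Up to the factor $1/(4(1+k))$, the two parts are as follows. The $\mathcal H_{12}$ part has diagonal $(1,k,k,1)$ on $(\ket{11},\ket{12},\ket{21},\ket{22})$ with coherence $-k$ between $\ket{12}$ and $\ket{21}$. The $\mathcal H_{13}$ part has diagonal $(1,k,k,1)$ on $(\ket{11},\ket{13},\ket{31},\ket{33})$ with coherence $-k$ between $\ket{11}$ and $\ket{33}$.

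Each part is an effective $2\times2$ state. For the first part, positivity requires $k^2\le k^2$, and its partial transpose requires $1\cdot1\ge k^2$. For the second part it is the other way round. Both conditions hold for $k\le1$. Each part is then a PPT two-qubit state, hence separable by the Peres--Horodecki criterion, and so their sum $\sigma_1(k)$ is separable. This splitting is the only non-routine step; if the equal split failed, I would optimize the split of the $\ket{11}$ weight, but the equal split already reaches $k=1$ exactly.

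\emph{Witness value and conclusion.} For $a=b=c=1/2$ and $w=z=1$, every diagonal entry of $C^{(1)}_{1/2,1/2}$ equals $1/2$, so the diagonal contributes $\tfrac12\Tr\sigma_1(k)=\tfrac12$. The only off-diagonal entries of the Choi matrix are $1$ at positions $(\ket{11},\ket{33})$ and $(\ket{12},\ket{21})$. Each of the two pairs contributes $2\cdot(-k)/(4(1+k))$, for a total of $-k/(1+k)$. The trace is therefore $\tfrac12-\tfrac{k}{1+k}=\tfrac{1-k}{2(1+k)}$.

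Since $b_{\min}(1/2)=1/2$ in \cref{prop:pos_case1}, the map $C^{(1)}_{1/2,1/2}$ is positive, so its Choi matrix is block-positive. A negative value of the witness for $1<k\le\sqrt2$ therefore certifies entanglement, and together with the PPT property established above this gives PPT entanglement on the whole branch.
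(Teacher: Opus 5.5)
Your proposal is correct and follows the paper's proof essentially step by step. The identification at $k=\sqrt2$, the $2\times2$-minor check of $\sigma_1(k)$ and of its partial transpose, the split into $\mathcal H_{12}$ and $\mathcal H_{13}$ blocks (your two parts are exactly $M_1(k)$ and $N_1(k)$), and the direct evaluation of $C^{(1)}_{1/2,1/2}$ are all as in the paper. You also make explicit a point the paper leaves implicit: that this witness is positive because it lies on the boundary $b_{\min}(1/2)=1/2$ of \cref{prop:pos_case1}.
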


\begin{proof}
The identity $\sigma_1(\sqrt2)=\rho_1$ is immediate from the definitions.
Moreover,
\begin{equation}
\Tr\!\bigl[C^{(1)}_{a,b}\,\sigma_1(k)\bigr]
=
\frac{a+kb-k}{1+k} \ ,
\end{equation}
so the choice $a=b=1/2$ gives the stated expectation value and proves
entanglement for $k>1$.

The stated PPT interval $0\le k\le\sqrt2$ follows directly from the
$2\times2$ principal minors of $\sigma_1(k)$ and its partial
transpose.

For $0\le k\le1$, write
\begin{equation}
\sigma_1(k)=\frac{1}{4(1+k)}\bigl(M_1(k)+N_1(k)\bigr) \ ,
\end{equation}
with
\begin{equation}
M_1(k)=
\begin{pmatrix}
1 & 0 & 0 & 0 \\
0 & k & -k & 0 \\
0 & -k & k & 0 \\
0 & 0 & 0 & 1
\end{pmatrix} \ ,
\qquad
N_1(k)=
\begin{pmatrix}
1 & 0 & 0 & -k \\
0 & k & 0 & 0 \\
0 & 0 & k & 0 \\
-k & 0 & 0 & 1
\end{pmatrix} \ ,
\label{eq:sigma1-blocks}
\end{equation}
written in the ordered bases of \cref{eq:xi-subspaces}. From the the
PPT criterion for $2\times2$ states it follows that $\sigma _1(k)$  is
separable for $0\le k\le 1$.
\end{proof}

The family $\sigma_1(k)$ should be contrasted with the original deformation
$\rho_1(k)$. For $\rho_1(k)$, the simple witness family $C^{(1)}_{a,b}$
detects entanglement only for $k>(1+\sqrt2)/2$, so the interval
$1<k\le(1+\sqrt2)/2$ requires stronger criteria such as CCNR or the DPS
hierarchy. By contrast, the fixed witness $C^{(1)}_{1/2,\,1/2}$ already
detects the entire PPT-entangled branch of $\sigma_1(k)$.

A similar sharpening occurs in the Case~3 geometry. We introduce two explicit
families, $\sigma_2(k)$ and $\sigma_3(k)$, whose endpoints at $k=\sqrt2$
coincide with the distinguished PPT states $\rho_2$ and $\rho_3$ and whose
entire PPT-entangled branch is detected by the fixed witness
$C^{(3)}_{1/3,\,2/3}$.

\begin{proposition}
\label{prop:sigma23}
For $0\le k\le\sqrt2$, define $\sigma_2(k)$ and $\sigma_3(k)$ as follows.
For $0\le k\le1$, set
\[
\sigma_2(k)=\sigma_3(k)=
\frac{1}{8}
\left(
\begin{array}{ccc|ccc|ccc}
2 & \cdot & \cdot & \cdot & \cdot & \cdot & \cdot & \cdot & -k\\
\cdot & 1 & \cdot & -k & \cdot & \cdot & \cdot & \cdot & \cdot\\
\cdot & \cdot & 1 & \cdot & \cdot & \cdot & \cdot & \cdot & \cdot\\ \hline
\cdot & -k & \cdot & 1 & \cdot & \cdot & \cdot & \cdot & \cdot\\
\cdot & \cdot & \cdot & \cdot & 1 & \cdot & \cdot & \cdot & \cdot\\
\cdot & \cdot & \cdot & \cdot & \cdot & \cdot & \cdot & \cdot & \cdot\\ \hline
\cdot & \cdot & \cdot & \cdot & \cdot & \cdot & 1 & \cdot & \cdot\\
\cdot & \cdot & \cdot & \cdot & \cdot & \cdot & \cdot & \cdot & \cdot\\
-k & \cdot & \cdot & \cdot & \cdot & \cdot & \cdot & \cdot & 1
\end{array}
\right) \ .
\]
For $1\le k\le\sqrt2$, define with $j=2,3$
\[
\sigma_j(k)=
\left(
\begin{array}{ccc|ccc|ccc}
q_j(k) & \cdot & \cdot & \cdot & \cdot & \cdot & \cdot & \cdot & -g_j(k)\\
\cdot & r_j(k) & \cdot & -g_j(k) & \cdot & \cdot & \cdot & \cdot & \cdot\\
\cdot & \cdot & s_j(k) & \cdot & \cdot & \cdot & \cdot & \cdot & \cdot\\ \hline
\cdot & -g_j(k) & \cdot & s_j(k) & \cdot & \cdot & \cdot & \cdot & \cdot\\
\cdot & \cdot & \cdot & \cdot & v_j(k) & \cdot & \cdot & \cdot & \cdot\\
\cdot & \cdot & \cdot & \cdot & \cdot & \cdot & \cdot & \cdot & \cdot\\ \hline
\cdot & \cdot & \cdot & \cdot & \cdot & \cdot & r_j(k) & \cdot & \cdot\\
\cdot & \cdot & \cdot & \cdot & \cdot & \cdot & \cdot & \cdot & \cdot\\
-g_j(k) & \cdot & \cdot & \cdot & \cdot & \cdot & \cdot & \cdot & v_j(k)
\end{array}
\right) \ ,
\]
with
\begin{align}
q_2(k)&=\frac{3}{10}+\frac{\sqrt2}{20}-\frac{1+\sqrt2}{20}k \ ,
& r_2(k)&=\frac{3}{20}+\frac{\sqrt2}{40}-\frac{1+\sqrt2}{40}k \ ,
& s_2(k)&=\frac{1}{20}-\frac{3\sqrt2}{40}+\frac{3(1+\sqrt2)}{40}k \ ,
\nonumber\\
v_2(k)&=r_2(k) \ ,
& g_2(k)&=\frac{1}{20}+\frac{\sqrt2}{40}+\frac{3-\sqrt2}{40}k \ .
\end{align}
and 
\begin{align}
q_3(k)&=\frac{3}{10}+\frac{\sqrt2}{20}-\frac{1+\sqrt2}{20}k \ ,
& r_3(k)&=\frac{1}{20}-\frac{3\sqrt2}{40}+\frac{3(1+\sqrt2)}{40}k \ ,
& s_3(k)&=\frac{3}{20}+\frac{\sqrt2}{40}-\frac{1+\sqrt2}{40}k \ ,
\nonumber\\
v_3(k)&=s_3(k)  \ ,
& g_3(k)&=\frac{1}{20}+\frac{\sqrt2}{40}+\frac{3-\sqrt2}{40}k \ .
\end{align}
Then:
\begin{enumerate}
\item $\sigma_2(\sqrt2)=\rho_2$ and $\sigma_3(\sqrt2)=\rho_3$;
\item both families are separable for $0\le k\le1$;
\item both families are PPT for $0\le k\le\sqrt2$;
\item
\[
\Tr\!\bigl[C^{(3)}_{1/3,\,2/3}\,\sigma_2(k)\bigr]
=
\Tr\!\bigl[C^{(3)}_{1/3,\,2/3}\,\sigma_3(k)\bigr]
=
\frac{\sqrt2-3}{15}(k-1) \ ,
\]
so both are PPT entangled and detected by $C^{(3)}_{1/3,\,2/3}$ for every
$1<k\le\sqrt2$.
\end{enumerate}
\end{proposition}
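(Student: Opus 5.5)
The proof splits into four checks, each reducing to the block structure already used for $\xi(a,k)$ and $\sigma_1(k)$.

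\textbf{Endpoint identity.} I will substitute $k=\sqrt2$ into $q_j,r_j,s_j,v_j,g_j$. For example, $q_2(\sqrt2)=\tfrac{3}{10}+\tfrac{\sqrt2}{20}-\tfrac{\sqrt2+2}{20}=\tfrac15$. Similarly $r_2(\sqrt2)=\tfrac1{10}$, $s_2(\sqrt2)=\tfrac15$ and $g_2(\sqrt2)=\tfrac{\sqrt2}{10}$. These reproduce \cref{eq:rho1}, and the analogous substitution gives \cref{eq:rho2}. I will also check continuity at $k=1$: $q_j(1)=\tfrac14$, all of $r_j,s_j,v_j$ equal $\tfrac18$ at $k=1$, and $g_j(1)=\tfrac18$. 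These match the first branch, so each $\sigma_j$ is a continuous path. I will verify unit trace from $q+2(r+s+v)=1$.

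\textbf{Separability and PPT.} For $0\le k\le1$ I will split the state exactly as in the proof of \cref{prop:sigma1}: $\sigma_j(k)=\tfrac18(M+N)$. Here $M$ is supported on $\mathcal H_{12}$ with diagonal $(1,1,1,1)$ and coherence $-k$ between $\ket{12}$ and $\ket{21}$. $N$ is supported on $\mathcal H_{13}$ with diagonal $(1,1,1,1)$ and coherence $-k$ between $\ket{11}$ and $\ket{33}$. The $\ket{11}$ weight $2$ is shared between the two blocks. Each block is a $2\times2$ state that is PPT iff $k\le1$, so it is separable by the Peres--Horodecki criterion~\cite{HORODECKI19961}.

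For PPT on $1\le k\le\sqrt2$, the matrix splits into the coherence pairs $\{\ket{11},\ket{33}\}$ and $\{\ket{12},\ket{21}\}$ plus diagonal entries. So $\sigma_j\succeq0$ iff $g_j^2\le q_jv_j$ and $g_j^2\le r_js_j$, with all diagonal entries nonnegative. After partial transposition the coherences move to the pairs $\{\ket{13},\ket{31}\}$ and $\{\ket{11},\ket{22}\}$. Then $\sigma_j^\tau\succeq0$ iff $g_j^2\le s_jr_j$ (the diagonal entries at $\ket{13},\ket{31}$) and $g_j^2\le q_jv_j$ (the diagonal entries at $\ket{11},\ket{22}$).

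Positivity and PPT therefore reduce to the same two quadratic inequalities in $k$. Both become equalities at $k=\sqrt2$, since there $g=\sqrt{rs}=\sqrt{qv}$, and both hold with slack at $k=1$. Since $g_j$, $q_jv_j$ and $r_js_j$ are explicit, I will show that $q_jv_j-g_j^2$ and $r_js_j-g_j^2$ are quadratics in $k$. Each is nonnegative at $k=1$ and vanishes at $k=\sqrt2$. I will locate its other root and check that it lies outside $[1,\sqrt2)$, or equivalently that the quadratic is nonnegative on that interval. This root-location check is the only computation that is not immediate, and I expect it to be the main obstacle. For $j=3$ the roles of $r$ and $s$ are swapped, but the products are unchanged, so the same check applies.

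\textbf{Witness value and conclusion.} I will compute $\Tr[C^{(3)}_{a,w}\sigma]$ for a generic state of this sparse form, with $a=c$, $b=1-2a$ and $w=z$. The result is linear in $q,r,s,v,g$:
\[
\Tr[C^{(3)}_{a,w}\sigma]=aq+br+cs+as+av+cr+bv-4wg,
\]
obtained by reading off matching entries of \cref{eq:ch1}. Substituting $a=\tfrac13$ and $w=\tfrac23$ gives $\tfrac13(q+2r+2s+2v)-\tfrac83g$. Using the trace normalization, this simplifies to $\tfrac13-\tfrac83g_j(k)$.

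With the given $g_j(k)$ this equals $\tfrac13-\tfrac{8}{3}\bigl(\tfrac{2+\sqrt2}{40}+\tfrac{3-\sqrt2}{40}k\bigr)$. Rewritten as a function of $k-1$, this is $\tfrac{\sqrt2-3}{15}(k-1)$, which is strictly negative for $k>1$. I expect the $k\le1$ branch to give a value at least $0$, consistent with separability.

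Finally, $C^{(3)}_{1/3,2/3}$ is block-positive by \cref{prop:pos_case3}, since $\tfrac23\le w_{\max}(\tfrac13)=\tfrac13+\tfrac13$. A negative expectation value therefore certifies entanglement, and together with the PPT property this gives PPT entanglement on $1<k\le\sqrt2$.
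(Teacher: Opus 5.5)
Your proposal is correct and follows essentially the paper's own route. For $k\le1$ you split the state over $\mathcal H_{12}$ and $\mathcal H_{13}$ and apply the $2\times2$ PPT criterion. For $k\ge1$ you use the two determinants $r_js_j-g_j^2$ and $q_jv_j-g_j^2$, which, as you note, govern both positivity and PPT. Finally, trace normalization gives the witness value $\tfrac13-\tfrac83 g_j(k)$.

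There are two slips, neither of which affects the conclusion.

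First, $r_js_j-g_j^2=(k-1)(\sqrt2-k)/80$ vanishes at $k=1$, so there is no slack there. Its other root is $k=1$ itself, so your check that the other root lies outside $[1,\sqrt2)$ would fail. The correct check is your fallback: concavity gives nonnegativity on $[1,\sqrt2]$.

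Second, your generic trace formula misreads the $\ket{31}$ and $\ket{33}$ diagonal entries of $C_\Phi$. It should be $aq+2br+2cs+2av-4wg$. This does not matter at $a=b=c=\tfrac13$, where the diagonal of $C_\Phi$ is uniform.
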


\begin{proof}
For the lower branch $0\le k\le1$, the two families coincide and admit the
block decomposition 
\begin{equation}
\sigma_2(k)=\sigma_3(k)=\frac18\bigl(M_0(k)+N_0(k)\bigr)\ ,
\end{equation}
with
\begin{equation}
M_0(k)=
\begin{pmatrix}
1 & 0 & 0 & 0 \\
0 & 1 & -k & 0 \\
0 & -k & 1 & 0 \\
0 & 0 & 0 & 1
\end{pmatrix} \ ,
\qquad
N_0(k)=
\begin{pmatrix}
1 & 0 & 0 & -k \\
0 & 1 & 0 & 0 \\
0 & 0 & 1 & 0 \\
-k & 0 & 0 & 1
\end{pmatrix} \ ,
\label{eq:sigma23-lower-blocks}
\end{equation}
in the ordered bases of \cref{eq:xi-subspaces}. Hence both families
are separable for $0\le k\le1$ by the PPT criterion for $2\times 2$
systems.

For the upper branch $1\le k\le\sqrt2$, both families admit the common
representation
\begin{equation}
\sigma_j(k)=M_j(k)+N_j(k) \ ,
\qquad j=2,3 \ ,
\end{equation}
where
\begin{equation}
M_j(k)=
\begin{pmatrix}
v_j(k) & 0 & 0 & 0 \\
0 & r_j(k) & -g_j(k) & 0 \\
0 & -g_j(k) & s_j(k) & 0 \\
0 & 0 & 0 & v_j(k)
\end{pmatrix} \ ,
\qquad
N_j(k)=
\begin{pmatrix}
v_j(k) & 0 & 0 & -g_j(k) \\
0 & s_j(k) & 0 & 0 \\
0 & 0 & r_j(k) & 0 \\
-g_j(k) & 0 & 0 & v_j(k)
\end{pmatrix} \ .
\label{eq:sigma23-upper-blocks}
\end{equation}
Since $q_2(k)=2v_2(k)$ and $q_3(k)=2v_3(k)$, these block decompositions
reproduce exactly the matrix forms defining $\sigma_2(k)$ and
$\sigma_3(k)$.

At $k=\sqrt2$, direct substitution gives
\[
\bigl(v_2,r_2,s_2,g_2\bigr)\Big|_{k=\sqrt2}
=
\left(\frac1{10},\frac1{10},\frac15,\frac{\sqrt2}{10}\right) \ ,
\qquad
\bigl(v_3,r_3,s_3,g_3\bigr)\Big|_{k=\sqrt2}
=
\left(\frac1{10},\frac15,\frac1{10},\frac{\sqrt2}{10}\right) \ ,
\]
whence $\sigma_2(\sqrt2)=\rho_2$ and $\sigma_3(\sqrt2)=\rho_3$.

To verify PPT for $1\le k\le\sqrt2$, it is enough to check the two
non-trivial determinants of the partial transpose. For both families one
finds
\begin{equation}
r_j(k)s_j(k)-g_j(k)^2
=
\frac{(k-1)(\sqrt2-k)}{80}\ge 0 \ ,
\qquad j=2,3 \ ,
\end{equation}
and
\begin{equation}
q_j(k)v_j(k)-g_j(k)^2
=
\frac{14+4\sqrt2-(8+6\sqrt2)k+(2\sqrt2-1)k^2}{320}\ge 0 \ ,
\qquad j=2,3 \ ,
\end{equation}
throughout the interval $1\le k\le\sqrt2$. Hence both upper branches are
PPT. Finally, evaluating the fixed witness $C^{(3)}_{1/3,\,2/3}$ yields the
stated expectation value, which is strictly negative for every
$1<k\le\sqrt2$.
\end{proof}

The families $\sigma_2(k)$ and $\sigma_3(k)$ play for $\rho_2$ and
$\rho_3$ the same role that $\sigma_1(k)$ plays for $\rho_1$: they provide
explicit interpolations between a separable regime and a PPT-entangled
regime while preserving the distinguished endpoint states arising from the
decomposability analysis.
\section{Conclusion}
\label{sec:conclusion}

We have presented a complete analytical study of a sparse family of positive
maps on qutrits that originally emerged from an optimization-based search for
non-decomposable maps~\cite{morgillo2026maps}. The main outcome is that this
family provides an explicit low-dimensional laboratory in which complete
positivity, positivity, decomposability, and PPT-entanglement detection can
all be analysed on equal footing.

The key structural ingredient is the sparse Choi-matrix pattern. It reduces
positivity to the analysis of a Hermitian biquadratic form, or equivalently
of a $3\times3$ Hermitian matrix over simplex variables, and thereby leads to
exact positivity regions for three representative parametric families. For the
same families we determined exact decomposability thresholds and constructed
explicit PPT states that witness the non-decomposable sectors. In this sense,
the map-theoretic and state-theoretic sides of the problem remain tightly
linked throughout the analysis.

The state families introduced in \cref{sec:bound_ent_states} sharpen this
picture further. Besides the witness-adapted deformations naturally inherited
from the decomposability analysis, we identified alternative one-parameter
families whose entire PPT-entangled branch is detected by fixed positive
maps. This yields exact thresholds between separability and PPT entanglement
and makes the geometry of the non-decomposable region explicit at the level
of concrete two-qutrit states. For the trace-preserving subclass, we also
compared positivity with a recent eigenvalue bound for $2$-positive
maps~\cite{boundoneigs2025}, thereby making the gap between positivity and
higher-order positivity completely explicit within this family.

Several directions remain open. It would be natural to understand how much of
the present structure survives under perturbations of the sparse Choi pattern,
and whether analogous analytically tractable families can be constructed in
higher dimensions. More broadly, the present example suggests that combining
optimization-based searches with structural analysis may provide an effective
route to further explicit families of non-decomposable maps and associated
PPT-entangled states.
\section{Acknowledgments}
A.R.M. acknowledges support from the PNRR MUR Project PE0000023-NQSTI. This work has been sponsored by PRIN MUR Project 2022SW3RPY.

\bibliographystyle{apsrev4-2}
\bibliography{bibliography5}

\appendix

\section{Kraus--Stinespring expansion}
\label{app:KS}
We derive the Kraus--Stinespring expansion of the map introduced in \cref{eq:map}. According to \cref{KSformula}, once the eigenvectors
\[
\ket{\phi_\alpha}=\sum_{n=1}^3\ket{n}\otimes\ket{\phi_{\alpha n}}
\]
of the Choi matrix $C_\Phi$ in \cref{eq:ch1} are known, the corresponding Kraus operators are obtained from
\begin{equation}
\label{Kraus)p}
F_\alpha^\dagger =
\begin{pmatrix}
\langle 1\vert\phi_{\alpha 1}\rangle & \langle 1\vert\phi_{\alpha 2}\rangle & \langle 1\vert\phi_{\alpha 3}\rangle \\
\langle 2\vert\phi_{\alpha 1}\rangle & \langle 2\vert\phi_{\alpha 2}\rangle & \langle 2\vert\phi_{\alpha 3}\rangle \\
\langle 3\vert\phi_{\alpha 1}\rangle & \langle 3\vert\phi_{\alpha 2}\rangle & \langle 3\vert\phi_{\alpha 3}\rangle
\end{pmatrix}\,.
\end{equation}

The structure of $C_\Phi$ makes its diagonalization essentially explicit: there is one $2\times2$ block associated with the pair $(\ket{11},\ket{33})$, one $2\times2$ block associated with $(\ket{12},\ket{21})$, and five one-dimensional blocks. The eigenvectors listed below reflect precisely this block decomposition.

The Kraus operators in \cref{eq:Kraus1,eq:Kraus2,eq:Kraus3} are obtained from
\begin{eqnarray}
\label{eigv1}
\ket{\phi_{11}}&=&\frac{1}{\sqrt{2}}
\begin{pmatrix}
1\\0\\0
\end{pmatrix}\ ,\ 
\ket{\phi_{12}}=
\begin{pmatrix}
0\\0\\0
\end{pmatrix}\ ,\ 
\ket{\phi_{13}}=\frac{1}{\sqrt{2}}
\begin{pmatrix}
0\\0\\{\rm e}^{-i \arg(z)}
\end{pmatrix}\ ;\\
\label{eigv2}
\ket{\phi_{21}}&=&\frac{1}{\sqrt{2}}
\begin{pmatrix}
1\\0\\0
\end{pmatrix}\ ,\ 
\ket{\phi_{22}}=
\begin{pmatrix}
0\\0\\0
\end{pmatrix}\ ,\ 
\ket{\phi_{23}}=\frac{1}{\sqrt{2}}
\begin{pmatrix}
0\\0\\-{\rm e}^{-i\arg(z)}
\end{pmatrix}\ ;\\
\label{eigv3}
\ket{\phi_{31}}&=&
\begin{pmatrix}
0\\0\\0
\end{pmatrix}\ ,\ 
\ket{\phi_{32}}=
\begin{pmatrix}
0\\0\\0
\end{pmatrix}\ ,\ 
\ket{\phi_{33}}=
\begin{pmatrix}
1\\0\\0
\end{pmatrix}\ ;\\
\label{eigv4}
\ket{\phi_{41}}&=&
\begin{pmatrix}
0\\0\\0
\end{pmatrix}\ ,\ 
\ket{\phi_{42}}=
\begin{pmatrix}
0\\0\\0
\end{pmatrix}\ ,\ 
\ket{\phi_{43}}=
\begin{pmatrix}
0\\1\\0
\end{pmatrix}\ ;\\
\label{eigv5}
\ket{\phi_{51}}&=&
\begin{pmatrix}
0\\0\\0
\end{pmatrix}\ ,\ 
\ket{\phi_{52}}=
\begin{pmatrix}
0\\1\\0
\end{pmatrix}\ ,\ 
\ket{\phi_{53}}=
\begin{pmatrix}
0\\0\\0
\end{pmatrix}\ ;\\
\label{eigv6}
\ket{\phi_{61}}&=&
\begin{pmatrix}
0\\0\\0
\end{pmatrix}\ ,\ 
\ket{\phi_{62}}=
\begin{pmatrix}
0\\0\\1
\end{pmatrix}\ ,\ 
\ket{\phi_{63}}=
\begin{pmatrix}
0\\0\\0
\end{pmatrix}\ ;\\
\label{eigv7}
\ket{\phi_{71}}&=&
\begin{pmatrix}
0\\0\\1
\end{pmatrix}\ ,\ 
\ket{\phi_{72}}=
\begin{pmatrix}
0\\0\\0
\end{pmatrix}\ ,\ 
\ket{\phi_{73}}=
\begin{pmatrix}
0\\0\\0
\end{pmatrix}\ ;\\
\label{eigv8}
\ket{\phi_{81}}&=&\frac{1}{\sqrt{
|w|
^2+(\gamma_8-b)^2}}
\begin{pmatrix}
0\\
|w|
\\0
\end{pmatrix}\ ,\ 
\ket{\phi_{82}}=
\begin{pmatrix}
(\gamma_8-b){\rm e}^{-i\arg(w)} \\0\\0
\end{pmatrix}\ ,\ 
\ket{\phi_{83}}=
\begin{pmatrix}
0\\0\\0
\end{pmatrix}\ ;\\
\label{eigv9}
\ket{\phi_{91}}&=&\frac{1}{\sqrt{
|w|
^2+(\gamma_9-b)^2}}
\begin{pmatrix}
0\\
|w|
\\0
\end{pmatrix}\ ,\ 
\ket{\phi_{92}}=
\begin{pmatrix}
(\gamma_9-b){\rm e}^{-i\arg(w)}\\0\\0
\end{pmatrix}\ ,\ 
\ket{\phi_{93}}=
\begin{pmatrix}
0\\0\\0
\end{pmatrix}\ .
\end{eqnarray}

Substituting these vectors into \cref{Kraus)p} yields the Kraus operators reported in \cref{eq:Kraus1,eq:Kraus2,eq:Kraus3}.

\section{$2$-Positivity}
\label{app3}
We show that, for the family of maps in \cref{eq:map}, $2$-positivity already implies $3$-positivity, and hence complete positivity and decomposability.

It is enough to test positivity of $(\mathrm{id}_2\otimes\Phi)[P]$ on rank-one projectors
$P:=\ket{\Psi}\bra{\Psi}$, where
\begin{equation}
\label{app3.1}
\ket{\Psi}=\ket{1}\otimes\ket{\psi_1}+\ket{2}\otimes\ket{\psi_2}\,,
\qquad
\ket{1}=\begin{pmatrix}1\\ 0\end{pmatrix}\,,
\qquad
\ket{2}=\begin{pmatrix}0\\ 1\end{pmatrix}\,,
\end{equation}
with $\ket{\psi_{1,2}}$ not necessarily normalized vectors in $\mathbb{C}^3$.

Then $2$-positivity requires the positivity of the $6\times 6$ matrix
\begin{equation}
\label{app3.2}
M=\sum_{i,j=1}^2\ket{i}\bra{j}\otimes\Phi[\ket{\psi_i}\bra{\psi_j}]=
\begin{pmatrix}
\Phi[\ket{\psi_1}\bra{\psi_1}]&\Phi[\ket{\psi_1}\bra{\psi_2}]\\
\Phi[\ket{\psi_2}\bra{\psi_1}]&\Phi[\ket{\psi_2}\bra{\psi_2}]
\end{pmatrix}\,.
\end{equation}

By choosing suitable pairs $(\ket{\psi_1},\ket{\psi_2})$, one can
isolate the constraints appearing in \cref{CPcond}. Setting
\[
\ket{\psi_1}=(1,0,0)^T,
\qquad
\ket{\psi_2}=0,
\]
yields the necessary conditions $a,b,c\geq 0$. Choosing instead
\[
\ket{\psi_2}=(0,1,0)^T
\]
singles out the non-trivial block involving $w$ and gives the
condition $bc\geq |w| ^2$.  Finally, taking
\[
\ket{\psi_2}=(0,0,1)^T
\]
isolates the block involving $z$ and yields the remaining condition $
a\geq |z|$.

Therefore, every $2$-positive map in this family necessarily satisfies all the conditions in \cref{CPcond}, and is hence completely positive.

\section{$1$-Positivity}
\label{app4}
We show that the study of positivity for the map in \cref{eq:map} reduces to the analysis of the Hermitian form in \cref{eq:bifo}.

The ($1$-)positivity of $\Phi$ is equivalent to the requirement that
\begin{equation}
\bra{\phi}\Phi[\ket{\chi}\!\bra{\chi}]\ket{\phi}\geq 0\,,
\qquad \forall\, \ket{\phi},\ket{\chi}\in\mathbb{C}^3\,.
\end{equation}
Writing
\[
\ket{\phi}=\sum_{i=1}^3 x_i\ket{i},
\qquad
\ket{\chi}=\sum_{i=1}^3 y_i\ket{i},
\]
with $x_i,y_i\in\mathbb{C}$, one obtains
\begin{equation}
\label{app4.1}
\sum_{i,j;k,\ell=1}^3 x_{\ell} x_k^\ast  y_i y_j^\ast \,\bra{k}\Phi[\ket{i}\!\bra{j}]\ket{\ell}\geq 0\,.
\end{equation}
Thus, the problem reduces to evaluating the action of $\Phi$ on the matrix units $\ket{i}\!\bra{j}$. From \cref{eq:map} one finds
\begin{eqnarray}
&&\Phi[\ket{1}\!\bra{1}]=\begin{pmatrix}
a&0&0\\0&b&0\\0&0&c
 \end{pmatrix}\,,
\qquad
\Phi[\ket{1}\!\bra{2}]=\begin{pmatrix}
0&0&0\\ w&0&0\\0&0&0
 \end{pmatrix}\,,
\qquad
 \Phi[\ket{1}\!\bra{3}]=\begin{pmatrix}
0&0&z\\0&0&0\\0&0&0
 \end{pmatrix}\,,
\\
&&\Phi[\ket{2}\!\bra{1}]=\begin{pmatrix}
0&w^\ast&0\\0&0&0\\0&0&0 \end{pmatrix}\,,
\qquad
\Phi[\ket{2}\!\bra{2}]=\begin{pmatrix}
c&0&0\\ 0&a&0\\0&0&b
 \end{pmatrix}\,,
\qquad
 \Phi[\ket{2}\!\bra{3}]=0\,,
\\
&&\Phi[\ket{3}\!\bra{1}]=\begin{pmatrix}
0&0&0\\0&0&0\\ z^\ast&0&0
 \end{pmatrix}\,,
\qquad
\Phi[\ket{3}\!\bra{2}]=0\,,
\qquad
 \Phi[\ket{3}\!\bra{3}]=\begin{pmatrix}
b&0&0\\0&c&0\\0&0&a
 \end{pmatrix}\,.
\end{eqnarray}
Substituting these expressions into \cref{app4.1} gives the biquadratic form reported in \cref{eq:bifo}.

\section{Case 3 for \texorpdfstring{$1/3<a<1/2$}{1/3<a<1/2}: interior tangency and elimination}
\label{app:caseCinterior}
We derive the interior branch that contributes to the positivity boundary of Case~3. Starting from \eqref{eq:S_case3} and \eqref{eq:du_case3}, and eliminating $w^2$ between the equations
\[
S(u_1,u_2)=0
\qquad\text{and}\qquad
\partial_{u_2}S(u_1,u_2)=0 \ ,
\]
with $u_3=1-u_1-u_2$, one finds the algebraic relation
\begin{equation}
u_2\bigl(a+\beta u_1\bigr)=u_3\bigl(a+\beta u_3\bigr)\,,
\label{eq:app_case3_relation}
\end{equation}
where $\beta :=1-3a$. 
Substituting this relation back into $S=0$ gives an explicit parametrization of the corresponding threshold values of $w$:
\begin{equation}
\label{eq:caseC_app_E}
w^2(u)=\frac{2\,(a+u)\left(1-a+\dfrac{u^2}{a}\right)}{\left(1+\dfrac{u^2}{a(1-3a)}\right)^2}\,,
\end{equation}
with $u:=\beta u_3$. The interior branch of the positivity boundary is
therefore reduced to a one-parameter function, and is obtained by
minimizing $w^2(u)$ over admissible values of $u$.

At an interior minimizer $u=u_\ast$, one has
\[
\frac{d}{du}w^2(u_\ast)=0 \ .
\]
Differentiating \eqref{eq:caseC_app_E} gives
\begin{equation}
\label{eq:caseC_app_dW}
\frac{d}{du}w^2(u)=\frac{2a(3a-1)^2}{\bigl(a(1-3a)+u^2\bigr)^3}\,Q(u)\,,
\end{equation}
where
\begin{equation}
\label{eq:caseC_app_Q}
Q(u)=u^4+2a u^3+6a^2u^2+2a^2(1+a)u-a^2(1-a)(1-3a)\,.
\end{equation}
Thus the stationarity condition $\frac{d}{du}w^2(u)=0$ is equivalent to $Q(u)=0$.

Now let
\begin{equation}
\label{eq:caseC_app_F}
F\bigl(u;w^2\bigr):=a w^2\bigl(a(1-3a)+u^2\bigr)^2-2\,(a+u)\bigl(a(1-a)+u^2\bigr)\,\bigl(a(1-3a)\bigr)^2\,.
\end{equation}
Then \eqref{eq:caseC_app_E} is equivalent to
\[
F(u;w^2)=0 \ .
\]

One can eliminate $u$ from the equations
\[
Q(u) = 0
\qquad\text{and}\qquad
F(u; w^2) = 0
\]
by computing their resultant with respect to $u$. Setting this
resultant, $\mathrm{Res}_u(Q,F)=0$, and removing extraneous factors
yields the polynomial
\begin{equation}
\label{eq:caseC_app_P}
\begin{aligned}
P(a, w^2) = {}&16a(2a - 1)w^8 + 8a(23a^3 - 32a^2 + 12a - 1)w^6 \\
&+ (3a - 1)^2(83a^4 - 102a^3 + 25a^2 + 6a - 1)w^4 \\
&+ 2a(3a - 1)^4(8a^3 - 13a^2 + 3a + 1)w^2 + a(1-a)(3a - 1)^6 \,.
\end{aligned}
\end{equation}
The smallest positive $w$ such that $P(a,w^2)=0$ defines
$w_{\mathrm{int}}(a)$. This is precisely the interior branch that
enters the positivity boundary in Case~3. Accordingly, for $1/3 \le a
\le 1/2$ the positivity border is
\begin{equation}
w_{\max}(a)=\min\left\{a+\sqrt{a-2a^2},\,w_{\mathrm{int}}(a)\right\} \ ,
\end{equation}
as stated in \cref{eq:wmax_case3_full}.

\end{document}